\providecommand{\U}[1]{\protect \rule{.1in}{.1in}}
\providecommand{\U}[1]{\protect \rule{.1in}{.1in}}
\providecommand{\U}[1]{\protect \rule{.1in}{.1in}}
\newtheorem{theorem}{Theorem}
\newtheorem{lemma}[theorem]{Lemma}
\newenvironment{proof}[1][Proof]{\noindent \textbf{#1.} }{\  \rule{0.5em}{0.5em}}
\def \doublespace {\openup 2.0\jot}
\begin{document}

\title{Semiparametric Regression Models for Explanatory Variables with Missing Data
due to Detection Limit}
\author{Zhang, J.$^{1\dag}$, Shao, L.$^{1\dag}$, Yang, K.$^{1}$, Quach, N.E.$^{1}$, Tu, S.$^{1}%
$, Chen, R.$^{2}$
\and Wu, T.$^{1}$, Liu, J.$^{3}$, Tu, J.$^{4}$, Suarez-Lopez, J.R.$^{5}$, Zhang,
X.$^{1}$, Lin, T.$^{6\ast\ddag}$ and Tu, X.M.$^{1\ast}$}
\maketitle

\begin{center}
	$^{1}$Division of Biostatistics and Bioinformatics\\UCSD\ Herbert Wertheim School of Public Health and Human Longevity Science\\La Jolla, CA 92093\\$^{2}$Division of Biostatistics\\Feinberg School of Medicine, Northwestern University, Chicago, IL 60611\\$^{3}$Department of Biostatistics\\Vanderbilt University Medical Center, Nashville, TN 37232\\$^{4}$Department of Orthopedics, \\Emory Health Care, Emory University, Atlanta, GA 30329\\$^{5}$UCSD\ Herbert Wertheim School of Public Health and Human Longevity Science\\La Jolla, CA 92093\\$^{6}$Department of Biostatistics\\University of Florida, Gainesville, FL 32608
\end{center}
$^{\dag}$These authors contributed equally to this work.\\
$^{\ast}$These authors jointly supervised this work.\\
$^{\ddag}$Corresponding author: Tuo Lin, \texttt{tuolin@ufl.edu}

\begin{center}
{\large Summary}
\end{center}

Detection limit (DL) has become an increasingly ubiquitous issue in statistical analyses of biomedical studies, such as cytokine, metabolite and protein analysis.  In regression analysis, if an explanatory variable is left-censored due to concentrations below the DL, one may limit analyses to observed data.  In many studies, additional, or surrogate, variables are available to model, and incorporating such auxiliary modeling information into the regression model can improve statistical power.  Although methods have been developed along this line, almost all are limited to parametric models for both the regression and left-censored explanatory variable. While some recent work has considered semiparametric regression for the censored DL-effected explanatory variable, the regression of primary interest is still left parametric, which not only makes it prone for biased estimates, but also suffers from high computational cost and inefficiency due to maximizing an extremely complex likelihood function and bootstrap inference. In this paper, we propose a new approach by considering semiparametric generalized linear models (SPGLM) for the primary regression and parametric or semiparametric models for DL-effected explanatory variable. The semiparametric and semiparametric combination provides the most robust inference, while the semiparametric and parametric case enables more efficient inference. The proposed approach is also much easier to implement and allows for leveraging sample splitting and cross fitting (SSCF) to improve computational efficiency in variance estimation. In particular, our approach improves computational efficiency over bootstrap by 450 times. We use simulated and real study data to illustrate the approach.   

Key words: \ Empirical processes, Generalized estimating equations, Efficient influence function,
Left-censoring, Sample splitting and cross fitting, Semiparametric efficiency

\section{Introduction\label{sec1}}

Detection limit (DL) due to limited precision in measurement of an outcome has
become increasingly a ubiquitous issue in clinical laboratory, drug development,
environmental and molecular biology. \ Specifically, in biomarker studies such
as cytokine, metabolite and protein analysis, the limitations of
innovative assay technologies generate data with high amount of missing data
due to DL \cite{wong2008reproducibility,hrydziuszko2012missing}. \ Although a DL threshold can be either a flooring or ceiling
value, DL due to flooring effects is the most common in modern biomedical
research and as such has been the focus in recent literatures
\cite{richardson2003effects,schisterman2006limitations,pickering2017rapid,klymus2020reporting}%
. \ If $x$ denotes a continuous outcome subject to DL and $\delta$ denotes the threshold, then $x$ is observed only when $x > \delta$. \ Note that in some cases, such as SomaScan proteomics assay data, although
measurements may still be available for $x\leq \delta$, they are typically
unreliable and should not be used in statistical analysis
\cite{candia2022assessment,armbruster2008limit}. \ 

If $x$ is used as a response, or dependent variable, parameter
estimates of explanatory, or independent, variables are generally biased if missing
data due to DL\ is completely ignored. Various statistical approaches have
been developed to address this challenge. \ For example, Lubin et al.
\cite{lubin2004epidemiologic} summarized two common approaches, the Tobit model \cite{tobin1958estimation} and multiple imputation
method \cite{lee2010multiple,little2019statistical}. \ To minimize the restrictive parametric model assumption, Zhang et
al. \cite{zhang2009nonparametric} proposed nonparametric methods for data with
high rates of DL and relative small sample size, but these methods only work
for the two-sample problem. \ Dutta et al. \cite{dutta2021semiparametric}
applied the pseudo-value technique and used semiparametric models for
inference. Tian et al. \cite{tian2024addressing}
developed a rank based cumulative
probability model for robust inference. \ 

When used as an explanatory variable, analyses based on observed data yield consistent estimates. \ However, potential
efficiency loss due to missing data creates high incentives to go beyond the
observed data. \ For example, in one of our recent studies investigating
effects of secondary exposures to pesticides on neurocogntive outcomes in
agricultural residents (see Section \ref{sec5} for details)
\cite{suarez2012lower}, missing data due to concentrations below the DL\ for most of the metabolites
exceed 30\%. \ By incorporating auxiliary information for missing $x$, we may significantly improve
statistical power. \ 

As in the case where $x$ is used as a response, available approaches for handling $x$ as an explanatory variable are largely
based on parametric models, using likelihood or multiple imputation techniques by positing
an additional parametric model for $x$
\cite{ye2024joint,arunajadai2012handling,bernhardt2015statistical,harel2014use}.
\ Recently, Kong and Nan \cite{kong2016semiparametric} and Chen et al.
\cite{chen2022semiparametric} considered an auxiliary semiparametric model for $x$ for more robust inference. \ However, with the primary regression continuing to follow the exponential family of distributions, biased estimates will arise if the parametric distribution is incorrectly specified. \ Additionally, solving the complex score equations is quite computationally intensive and algebraically complex. \ The bootstrap variance estimation further increases the computational burden, making it difficult to use in practice. \ For example, when applied to our real study data in Section \ref{sec5}, this approach takes approximately 7.5 hours. 

In this paper, we leverage the semiparametric theory to further
extend the primary regression to semiparametric models, while allowing for both
parametric and semiparametric models for $x$ to provide robust and efficient
inference. \ The proposed approach not only provides valid inference for virtually all data distribution, but also is much simpler to implement and much more computationally efficient. \ When applied to the same real study data in Section \ref{sec5}, it takes less than one minute to compute all estimates for inference. \ The rest of the paper is structured as follows. \ Section
\ref{sec2} introduces a two-component modeling approach, followed by inference
for model parameters. \ Section \ref{sec4} includes
simulation studies to assess the consistency, robustness and efficiency gain
of the proposed approach, while Section \ref{sec5} contains a real study
application. \ Section \ref{sec6} highlights findings and discusses
limitations and further extensions of the approach. \ 

\section{A Two-Component Modeling Approach\label{sec2}}

\subsection{Model Setup}
Let $y_{i}$ denote a response, $x_{i}$ a continuous explanatory variable
subject to DL with a threshold $\delta$, and $\mathbf{u}_{i}$ a vector of
remaining explanatory variables not subject to DL from the $i$th subject
($1\leq i\leq n$). \ Let $\mathbf{z}_{i}=\left(  z_{i1},z_{i2},\ldots
,z_{im}\right)  ^{\top}$ denote a vector of surrogate variables for modeling
missing $x_{i}$ due to DL, which may overlap with $\mathbf{u}_{i}$. \ The
surrogates provide information to model DL-effected missing in $x$ so we can incorporate such information to improve efficiency for
the regression of interest.
\ We assume no other type of missing in any of the variables. \ We denote the
observed data as: $\left \{  y_{i},\mathbf{w}_{i}\right \}  $ with
$\mathbf{w}_{i}=\left \{  x_{i}^{obs},\mathbf{u}_{i},\mathbf{z}_{i}\right \}  $
$\left(  1\leq i\leq n\right)  $, where $x_{i}^{obs} = \{x_i>\delta, \mathbbm{1}(x_i<\delta)\}$, and $\mathbbm{1}(\cdot)$ denotes an indicator function. \ 

Our primary interest lies in the relationship of $y_{i}$ with $\mathbf{u}_{i}$
and $x_{i}$ in a semiparametric generalized linear model (SPGLM), or
restricted moment model \cite{tsiatis2006semiparametric,tang2023applied}:\
\begin{equation}
E\left(  y_{i}\mid \mathbf{u}_{i},x_{i}\right)  =h\left(  x_{i},\mathbf{u}%
_{i};\boldsymbol{\beta}\right)  ,\quad1\leq i\leq n \label{eqn10}%
\end{equation}
where $h\left(  \cdot \right)  $ denotes the inverse of an appropriate link
function and $\boldsymbol{\beta}=\left(  \beta_{0},\beta_{1},\boldsymbol{\beta
}_{2}^{\top}\right)  ^{\top}$ denotes the vector of parameters. Unlike its
parametric counterpart, SPGLM imposes no parametric distribution for $y$ given $\mathbf{u}$ and $x$ and thus
provides valid inference for virtually all data distributions
\cite{tsiatis2006semiparametric,tang2023applied}. \ If ignoring the missing
$x_{i}$ due to DL, we can readily fit the model in (\ref{eqn10}) based on the
subsample with observed $x_{i}$ $\left(  >\delta \right)  $. \ However, this
approach may be inefficient as noted earlier, especially with a high rate of
missing $x_{i}$. \ 

To utilize the surrogates $\mathbf{z}_{i}$ for information about the missing
$x_{i}$ due to DL, we consider a second, or auxiliary, linear regression
model:\
\begin{equation}
x_{i}=\gamma_{0}+\boldsymbol{\gamma}_{1}^{\top}\mathbf{z}_{i}+\epsilon
_{i},\quad1\leq i\leq n \label{eqn50}%
\end{equation}
where $\epsilon_{i}$ follows either a parametric or non-parametric
distribution. \ For parametric models, the distribution of $\epsilon_{i}$ is
determined by a finite-dimensional vector of parameters $\xi$. \ For example,
under a normal $\epsilon_{i}\sim N\left(  0,\sigma_{x}^{2}\right)  $,
(\ref{eqn50})\ is parameterized by $\boldsymbol{\eta}=\left(
\boldsymbol{\gamma}^{\top},\xi \right)  ^{\top}$, with $\xi=\sigma_{x}^{2}$ and $\boldsymbol{\gamma}=\left(
\gamma_{0},\boldsymbol{\gamma}_{1}^{\top}\right)  ^{\top}$. \ For
non-parametric $\epsilon_{i}$, $\xi \left(  \cdot \right)  $ is
infinite-dimensional denoted $\boldsymbol{\eta}=\left(  \boldsymbol{\gamma}%
^{\top},\xi \left(  \cdot \right)  \right)  ^{\top}$. \ For example, in Kong and
Nan \cite{kong2016semiparametric}, the left-censored $x_{i}$ is transformed to
a right-censored $t_{i}$ (using a monotone decreasing function, e.g., $x = T(t)=\exp(-t)$ and $T(t)=-t$), which is modeled
using a semiparametric linear model:
\begin{equation}
t_{i}=\zeta_{0}+\boldsymbol{\zeta}_{1}^{\top}\mathbf{z}_{i}+\varepsilon
_{i},\quad1\leq i\leq n \label{eqn54}%
\end{equation}
where $\varepsilon_{i}$ is non-parametric with an infinite-dimensional
$\xi \left(  t\right)  $ $\left(  t\geq0\right)  $. \ For notational
simplicity, we use $\boldsymbol{\gamma}$ \ to refer to the regression
coefficients, and $\epsilon_{i}$ to the error term for both (\ref{eqn50}) and
(\ref{eqn54}), with finite- and
infinite-dimensional $\boldsymbol{\eta}$ for the respective parametric and semiparametric models for $x_{i}%
$. \ 

To incorporate (\ref{eqn50}) into (\ref{eqn10}), we assume for the observed $x_i$:
\begin{equation}
E\left(  y_{i}\mid \mathbf{u}_{i},x_{i},\mathbf{z}_{i}\right)  =E\left(
y_{i}\mid \mathbf{u}_{i},x_{i}\right)  =h\left(  x_{i},\mathbf{u}%
_{i};\boldsymbol{\beta}\right)  \label{eqn60}%
\end{equation}
The condition in (\ref{eqn60})\ is akin to the \textquotedblleft
surrogacy\textquotedblright \ assumption in measurement error, or errors in
variables, models \cite{kowalski2002generalized,richardson2003effects}. \ It
simply states that given $x_{i}$ the surrogates $\mathbf{z}_{i}$ provide no
additional information for the primary regression model (\ref{eqn10}). \ 

With the above assumptions, we now have a two-component model for the observed
data $\left \{  y_{i},\mathbf{w}_{i}\right \}  $:%
\begin{align}
E\left(  y_{i}\mid \mathbf{w}_{i}\right)   &  =g\left(  \mathbf{w}%
_{i};\boldsymbol{\theta}\right)  ,\quad \boldsymbol{\theta}=\left(
\boldsymbol{\beta}^{\top},\boldsymbol{\eta}^{\top}\right)  ^{\top},\quad1\leq
i\leq n,\label{eqn70}\\
g\left(  \mathbf{w}_{i};\boldsymbol{\theta}\right)   &  =\left \{
\begin{array}
[c]{ll}%
h\left(  x_{i},\mathbf{u}_{i};\boldsymbol{\beta}\right)  & \text{if }%
x_{i}>\delta \\
\widetilde{h}\left(  \mathbf{z}_{i},\mathbf{u}_{i};\boldsymbol{\beta
},\boldsymbol{\eta}\right)  & \text{if }x_{i}\leq \delta
\end{array}
\right. \nonumber \\
x_{i}  &  =\gamma_{0}+\boldsymbol{\gamma}_{1}^{\top}\mathbf{z}_{i}%
+\epsilon_{i},\quad1\leq i\leq n\nonumber
\end{align}
where
\begin{align*}
\widetilde{h}\left(  \mathbf{z}_{i},\mathbf{u}_{i};\boldsymbol{\beta
},\boldsymbol{\eta}\right)   &  =E\left[  E\left(  y_{i}\mid \mathbf{z}%
_{i},\mathbf{u}_{i},x_{i}\right)  \mid \mathbf{z}_{i},\mathbf{u}_{i},x_{i}%
\leq \delta;\boldsymbol{\eta}\right] \\
&  =E\left[  h\left(  x_{i},\mathbf{u}_{i};\boldsymbol{\beta}\right)
\mid \mathbf{z}_{i},\mathbf{u}_{i},x_{i}\leq \delta;\boldsymbol{\eta}\right]
\end{align*}
We refer to the setting with a semiparametric primary component and a parametric auxiliary component as semi-para; the other configuration semi-semi is defined analogously. \ Unlike the model in (\ref{eqn10}), $\boldsymbol{\theta}=\left(
\boldsymbol{\beta}^{\top},\boldsymbol{\eta}^{\top}\right)  ^{\top}$ is the
parameter vector for the two-component model, with $\boldsymbol{\beta}$\ being
the parameters of interest for the primary model in
(\ref{eqn10}) and $\boldsymbol{\eta}$ the nuisance parameters for the
parametric or semiparametric linear model for $x_{i}$. \

The conditional mean $g\left(  \mathbf{w}_{i};\boldsymbol{\theta}\right)  $ in
(\ref{eqn70}) generally cannot be expressed in closed form for $x_{i}%
\leq \delta$. \ For a normal $\epsilon_{i}\sim N\left(  0,\sigma_{x}%
^{2}\right)  $, $g\left(  \mathbf{w}_{i};\boldsymbol{\theta}\right)  $ is in
closed form (see Appendix for details):
\begin{align}
g\left(  \mathbf{w}_{i};\boldsymbol{\theta}\right)   &  =\left \{
\begin{array}
[c]{ll}%
\beta_{0}+\beta_{1}x_{i}+\boldsymbol{\beta}_{2}^{\top}\mathbf{u}_{i} &
\text{if }x_{i}>\delta \\
\beta_{0}+\beta_{1}E\left(  x_{i}\mid \mathbf{z}_{i},x_{i}\leq \delta \right)
+\boldsymbol{\beta}_{2}^{\top}\mathbf{u}_{i} & \text{if }x_{i}\leq \delta
\end{array}
\right. \label{eqn80}\\
E\left(  x_{i}\mid \mathbf{z}_{i},x_{i}\leq \delta \right)   &  =\gamma
_{0}+\boldsymbol{\gamma}_{1}^{\top}\mathbf{z}_{i}-\sigma_{x}^{2}\frac
{\phi_{\mu_{i},\sigma_{x}^{2}}\left(  \delta \right)  }{\Phi_{\mu_{i}%
,\sigma_{x}^{2}}\left(  \delta \right)  }\nonumber
\end{align}
where $\phi_{\mu_{i},\sigma_{x}^{2}}(x)$ and $\Phi_{\mu_{i},\sigma_{x}^{2}%
}(x)$ denote the PDF and CDF of $x_{i}$ given $\mathbf{z}_{i}$ with mean
$\mu_{i}=\gamma_{0}+\boldsymbol{\gamma}_{1}^{\top}\mathbf{z}_{i}$ and variance
$\sigma_{x}^{2}$ evaluated at $x$. \ 

Variability of $x_{i}$ (given $\mathbf{z}_{i}$) in the auxiliary model in (\ref{eqn50})
(or (\ref{eqn54})) relative to that of $y_{i}$ (given $\mathbf{u}_{i},x_{i}$)
in the primary model plays a critical role for efficiency gains. \ For example, if
$x_{i}$ is much more variable than $y_{i}$, it may offset efficiency gain,
making the estimator of $\boldsymbol{\beta}$ from (\ref{eqn70}) less efficient
than that from (\ref{eqn10}) based on the observed data $\left(
x_{i}>\delta \right)  $ only. \ We investigate this trade-off in Section \ref{sec4}
using simulation studies. \ 

\subsection{Inference\label{sec3}}

Our primary goal is estimating $\boldsymbol{\beta}$. \ We start by assuming
that $\boldsymbol{\eta}$ is known, in which case $\boldsymbol{\eta=\eta}_{0}$
with $\boldsymbol{\eta}_{0}$ being a vector of known constants. \ Then
$\boldsymbol{\beta}$ is the only parameters. \ Let%
\begin{equation}
g_{i}=g\left(  \mathbf{w}_{i};\boldsymbol{\beta}\right)  ,\quad D_{i}%
=\frac{\partial}{\partial \boldsymbol{\beta}^{\top}}g_{i},\quad S_{i}%
=y_{i}-g_{i},\quad V_{i}=Var_{w}\left(  y_{i}\mid \mathbf{w}_{i}\right)
\label{eqn220}%
\end{equation}
where $Var_{w}\left(  y_{i}\mid \mathbf{w}_{i}\right)  $ denotes the working
variance of $y_{i}$ given $\mathbf{w}_{i}$. \ All the terms in (\ref{eqn220})
are well-defined, except for $V_{i}$, since only the conditional mean of
$y_{i}$ given $\mathbf{w}_{i}$ is specified in (\ref{eqn70}). \ In practice,
we specify a working variance $Var_{w}\left(  y_{i}\mid \mathbf{w}_{i}\right)
$, which needs not be the same as the true variance $Var\left(
y_{i}\mid \mathbf{w}_{i}\right)  $. \ For example, one may use $Var_{w}\left(
y_{i}\mid \mathbf{w}_{i}\right)  =\sigma^{2}$ for continuous and $Var_{w}%
\left(  y_{i}\mid \mathbf{w}_{i}\right)  =g_{i}\left(  1-g_{i}\right)  $ for
binary $y_{i}$. \ 

We estimate $\boldsymbol{\beta}$ using the generalized estimating equations
(GEE):
\begin{equation}
U_{n}\left(  \boldsymbol{\beta}\right)  =\sum_{i=1}^{n}U_{ni}\left(
\boldsymbol{\beta}\right)  =\sum_{i=1}^{n}D_{i}V_{i}^{-1}S_{i}=\mathbf{0.}
\label{eqn260}%
\end{equation}
Under mild regularity conditions, the GEE\ estimator $\widehat
{\boldsymbol{\beta}}$ obtained by solving the equations above is consistent
and asymptotically normal:%

\begin{align}
\sqrt{n}\left(  \widehat{\boldsymbol{\beta}}-\boldsymbol{\beta}_{0}\right)
&  \rightarrow_{d}N(0,\Sigma_{\beta}),\label{eqn264}\\
\Sigma_{\beta}  &  =B^{-1}\Sigma_{U}B^{-\top},\quad B=E\left(  D_{i}^{\top
}V_{i}^{-1}D_{i}\right)  ,\quad \Sigma_{U}=E\left(  D_{i}^{\top}V_{i}^{-1}%
S_{i}S_{i}^{\top}V_{i}^{-1}D_{i}\right) \nonumber
\end{align}
where $\boldsymbol{\beta}_{0}$ denotes the true value of $\boldsymbol{\beta}$.
\ The asymptotic variance $\Sigma_{\beta}$ is the variance of the efficient
influence function (IF) for $\widehat{\boldsymbol{\beta}}$,
$\boldsymbol{\varphi}_{i}\left(  y_{i},\mathbf{w}_{i};\boldsymbol{\beta}%
_{0}\right)  =B^{-1}D_{i}V_{i}^{-1}S_{i}$ \cite{tsiatis2006semiparametric}.
\ A consistent estimator $\widehat{\Sigma}_{\beta}$ of 
$\Sigma_{\beta}$ is obtained by substituting the expectations in
(\ref{eqn264}) with their respective sample counterparts:
\begin{align}
\widehat{B}  &  =\frac{1}{n}\sum_{i=1}^{n}\widehat{D}_{i}^{\top}\widehat
{V}_{i}^{-1}\widehat{D}_{i},\quad \widehat{\Sigma}_{U}=\frac{1}{n}\sum
_{i=1}^{n}\widehat{D}_{i}^{\top}\widehat{V}_{i}^{-1}\widehat{S}_{i}\widehat
{S}_{i}^{\top}\widehat{V}_{i}^{-1}\widehat{D}_{i}\label{eqn268}\\
\widehat{\Sigma}_{\beta}  &  =\frac{1}{n}\sum_{i=1}^{n}\boldsymbol{\varphi
}_{i}\left(  y_{i},\mathbf{w}_{i};\widehat{\boldsymbol{\beta}}\right)
\boldsymbol{\varphi}_{i}^{\top}\left(  y_{i},\mathbf{w}_{i};\widehat
{\boldsymbol{\beta}}\right)  =\widehat{B}^{-1}\widehat{\Sigma}_{U}\widehat
{B}^{-\top},\nonumber
\end{align}
where $\widehat{A}_{i}$ denotes $A_{i}$ with $\boldsymbol{\beta}$ \ replaced
by $\widehat{\boldsymbol{\beta}}$. \ The GEE estimator $\widehat
{\boldsymbol{\beta}}$ is also asymptotically efficient
\cite{tsiatis2006semiparametric}. \ 

In most applications, $\boldsymbol{\eta}$ is {\normalsize unknown. \ }If
$\boldsymbol{\eta}$ is finite dimensional, we readily estimate
$\boldsymbol{\eta}$ using maximum likelihood. \ Let $f\left(  x_{i}%
\mid \mathbf{z}_{i};\boldsymbol{\eta}\right)  $ denote the density of $x_{i}$
given $\mathbf{z}_{i}$. \ Since the censoring value $\delta$ is the same for
all subjects, we may view the observed $x_{i}$ as sampled from the (left)
truncated version of $f\left(  x_{i}\mid \mathbf{z}_{i};\boldsymbol{\eta
}\right)  $ and compute the MLE\ of $\boldsymbol{\eta}$ from the truncated
likelihood and associated score given by:
\begin{align*}
l_{n_{o}}\left(  \boldsymbol{\eta}\right)   &  =\sum_{x_{i}>\delta}l{_{i}%
}\left(  \boldsymbol{\eta}\right)  ,\quad Q_{n_{o}}\left(  \boldsymbol{\eta
}\right)  =\sum_{x_{i}>\delta}Q{_{n_{o}i}}\left(  \boldsymbol{\eta}\right) \\
l_{n_{o}i}\left(  \boldsymbol{\eta}\right)   &  =\log \left(  f\left(
x_{i}\mid \mathbf{z}_{i};\boldsymbol{\eta}\right)  \right)  -\log \left(
\int_{\delta}^{\infty}f\left(  x\mid \mathbf{z}_{i};\boldsymbol{\eta}\right)
dx\right)  ,\quad Q{_{n_{o}i}}\left(  \boldsymbol{\eta}\right)  =\frac
{\partial}{\partial \boldsymbol{\eta}^{\top}}l{_{n_{o}i}}\left(
\boldsymbol{\eta}\right)
\end{align*}
where $n_{o}$ denotes the sample size of the observed $x_{i}>\delta$. \ The
MLE $\widehat{\boldsymbol{\eta}}$ from solving the score equations $Q_{n_{o}%
}\left(  \widehat{\boldsymbol{\eta}}\right)  =\mathbf{0}$ is consistent and
asymptotically normal:
\[
\sqrt{n}\left(  \widehat{\boldsymbol{\eta}}-\boldsymbol{\eta}_{0}\right)
=\sqrt{n_{0}}\frac{1}{\sqrt{\frac{n_{0}}{n}}}\left(  \widehat{\boldsymbol{\eta
}}-\boldsymbol{\eta}\right)  \rightarrow_{p}N(\mathbf{0},\frac{1}{p}%
I^{-1}),\quad n\rightarrow \infty,
\]
where $I=E\left[  -\frac{\partial}{\partial \boldsymbol{\eta}}Q{_{n_{o}i}%
}\left(  \boldsymbol{\eta}\right)  \right]  $ is the Fisher information and
$p=\lim_{n\rightarrow \infty}\sqrt{\frac{n_{0}}{n}}$. \ A consistent estimator
of $I$ is given by the observed information:\
\[
\widehat{I}=-\frac{1}{n_{o}}\frac{\partial}{\partial \boldsymbol{\eta}^{\top}%
}Q_{n_{o}}\left(  \boldsymbol{\eta}\right)  \bigg|_{\boldsymbol{\eta=}%
\widehat{\boldsymbol{\eta}}}%
\]

By substituting $\widehat{\boldsymbol{\eta}}$ in place of $\boldsymbol{\eta}$,
we obtain:
\begin{equation}
U_{n}\left(  \boldsymbol{\beta},\widehat{\boldsymbol{\eta}}\right)
=\sum_{i=1}^{n}U_{ni}\left(  \boldsymbol{\beta},\widehat{\boldsymbol{\eta}%
}\right)  =\sum_{i=1}^{n}D_{i}V_{i}^{-1}S_{i}=\mathbf{0} \label{eqn270}%
\end{equation}
We solve (\ref{eqn270}) for $\boldsymbol{\beta}$ to obtain a consistent and
asymptotically normal estimator $\widehat{\boldsymbol{\beta}}$, or
$\widehat{\boldsymbol{\beta}}\left(  \widehat{\boldsymbol{\eta}}\right)  $.
\ However, unlike the case of known\ $\boldsymbol{\eta}$, the IF,
$\boldsymbol{\varphi}_{i}\left(  y_{i},\mathbf{w}_{i};\boldsymbol{\beta}%
_{0}\right)  =B^{-1}D_{i}V_{i}^{-1}S_{i}$, is no longer efficient and the
asymptotic variance $\Sigma_{\beta}$ for $\widehat{\boldsymbol{\beta}}\left(
\boldsymbol{\eta}_{0}\right)  $ in (\ref{eqn264}) is no longer the asymptotic
variance for $\widehat{\boldsymbol{\beta}}\left(  \widehat{\boldsymbol{\eta}%
}\right)  $. \ The asymptotic variance of the efficient version of
$\widehat{\boldsymbol{\beta}}\left(  \widehat{\boldsymbol{\eta}}\right)  $ is
given in Theorem \ref{thm1}. \ See Appendix for a derivation
of the efficient IF for $\widehat{\boldsymbol{\beta}}\left(  \widehat
{\boldsymbol{\eta}}\right)  $ and associated asymptotic variance. \ For
notational simplicity, we continue to denote $\widehat{\boldsymbol{\beta}%
}\left(  \widehat{\boldsymbol{\eta}}\right)  $ by $\widehat{\boldsymbol{\beta
}}\ $unless stated otherwise. \

\begin{theorem}
\label{thm1}Let $\widehat{\boldsymbol{\beta}}$ be the solution to
(\ref{eqn270}). \ Let
\begin{align}
D_{i}  &  =\frac{\partial}{\partial \boldsymbol{\beta}^{\top}}g_{i}\left(
\mathbf{w}_{i}\mathbf{;}\boldsymbol{\beta},\boldsymbol{\eta}\right)  ,\quad
S_{i}=y_{i}-g_{i},\quad V_{i}=Var\left(  y_{i}\mid \mathbf{w}_{i}\right)
,\label{eqn280}\\
M_{i}  &  =\frac{\partial}{\partial \boldsymbol{\eta}^{\top}}g_{i}\left(
\mathbf{w}_{i}\mathbf{;}\boldsymbol{\beta},\boldsymbol{\eta}\right)  ,\quad
H=E\left(  \frac{\partial}{\partial \boldsymbol{\eta}}Q_{i}\right)  ,\quad
p=\lim_{n\rightarrow \infty}\sqrt{\frac{n_{o}}{n}}.\nonumber
\end{align}
Then, under mild regularity conditions $\widehat{\boldsymbol{\beta}}$ is
consistent and asymptotically normal:
\[
\sqrt{n}\left(  \widehat{\boldsymbol{\beta}}-\boldsymbol{\beta}_{0}\right)
\rightarrow_{d}N(0,\Sigma_{\beta}),
\]
where
\begin{align}
\Sigma_{\beta}  &  =B^{-1}\left(  \Sigma_{U}+\Phi \right)  B^{-\top},\quad
B=E\left(  D_{i}^{\top}V_{i}^{-1}D_{i}\right)  ,\quad \Sigma_{U}=E\left(
D_{i}^{\top}V_{i}^{-1}S_{i}S_{i}^{\top}V_{i}^{-1}D_{i}\right)  ,
\label{eqn300}\\
\quad \Phi &  =\frac{1}{p^{4}}E\left(  D_{i}^{\top}V_{i}^{-1}M_{i}\right)
H^{-1}E\left(  M_{i}^{\top}V_{i}^{-\top}D_{i}\right)  ,\quad \nonumber
\end{align}

\end{theorem}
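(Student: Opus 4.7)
\textbf{Proof plan for Theorem \ref{thm1}.} The plan is a standard two-stage Z-estimator argument: I would stack the primary estimating equation $U_n(\boldsymbol{\beta},\boldsymbol{\eta})=\mathbf{0}$ with the nuisance score $Q_{n_o}(\boldsymbol{\eta})=\mathbf{0}$, linearize both around $(\boldsymbol{\beta}_0,\boldsymbol{\eta}_0)$, and read off the joint influence function for $\widehat{\boldsymbol{\beta}}$. Consistency of $\widehat{\boldsymbol{\beta}}$ follows by combining the already-established consistency of $\widehat{\boldsymbol{\eta}}$ with uniform convergence of $n^{-1}U_n(\boldsymbol{\beta},\boldsymbol{\eta})$ on a compact neighborhood of $(\boldsymbol{\beta}_0,\boldsymbol{\eta}_0)$ and identifiability of the limit zero in $\boldsymbol{\beta}$; the latter reduces to $E\{U_{ni}(\boldsymbol{\beta}_0,\boldsymbol{\eta}_0)\}=\mathbf{0}$, which is immediate from $E(S_i\mid\mathbf{w}_i)=0$.

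For the limit distribution I would Taylor expand $\mathbf{0}=U_n(\widehat{\boldsymbol{\beta}},\widehat{\boldsymbol{\eta}})$ jointly in $(\boldsymbol{\beta},\boldsymbol{\eta})$. Because $E(S_i\mid\mathbf{w}_i)=0$ -- by (\ref{eqn60}) when $x_i>\delta$ and by the tower property defining $\widetilde{h}$ when $x_i\le\delta$ -- every term of $\partial U_{ni}/\partial\boldsymbol{\beta}^{\top}$ and $\partial U_{ni}/\partial\boldsymbol{\eta}^{\top}$ that carries a factor of $S_i$ drops out in expectation, so the two Jacobian probability limits are $-B$ and $-E(D_i^{\top}V_i^{-1}M_i)$ respectively. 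A parallel Taylor expansion of $\mathbf{0}=Q_{n_o}(\widehat{\boldsymbol{\eta}})$ produces an asymptotic linear expansion for $\sqrt{n}(\widehat{\boldsymbol{\eta}}-\boldsymbol{\eta}_0)$ in terms of $n^{-1/2}Q_{n_o}(\boldsymbol{\eta}_0)$ with an overall $p^{-2}H^{-1}$ multiplier, the factor $p^{-2}$ arising because $Q_{n_o}$ is an $n_o$-sum rather than an $n$-sum. Substitution yields a representation of the form
\begin{equation*}
\sqrt{n}(\widehat{\boldsymbol{\beta}}-\boldsymbol{\beta}_0)=B^{-1}\!\left\{\frac{U_n(\boldsymbol{\beta}_0,\boldsymbol{\eta}_0)}{\sqrt{n}}+\frac{1}{p^{2}}E(D_i^{\top}V_i^{-1}M_i)H^{-1}\frac{Q_{n_o}(\boldsymbol{\eta}_0)}{\sqrt{n}}\right\}+o_p(1),
\end{equation*}
and the multivariate CLT converts this into the claimed normal limit.

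The structural step that really matters -- and that makes the asymptotic variance split cleanly as $\Sigma_U+\Phi$ with no cross term -- is the orthogonality of the two stochastic pieces inside the braces. I would establish it subject by subject: when $x_i>\delta$, $Q_{n_o i}$ is a measurable function of $(x_i,\mathbf{z}_i)\subset\mathbf{w}_i$, so iterating the expectation gives $E\{D_i^{\top}V_i^{-1}S_iQ_{n_o i}^{\top}\}=E\{D_i^{\top}V_i^{-1}\,E(S_i\mid\mathbf{w}_i)\,Q_{n_o i}^{\top}\}=\mathbf{0}$; when $x_i\le\delta$, $Q_{n_o i}\equiv\mathbf{0}$ identically. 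Subject independence disposes of all $i\neq j$ cross terms. The remaining algebra -- evaluating $\mathrm{Var}\{n^{-1/2}Q_{n_o}(\boldsymbol{\eta}_0)\}$ by the information identity for the truncated likelihood and sandwiching it between $H^{-1}$ and $E(D_i^{\top}V_i^{-1}M_i)$ on either side -- yields the stated $p^{-4}$ prefactor in $\Phi$.

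The step I expect to be hardest is the semi-semi case, in which $\boldsymbol{\eta}$ contains the infinite-dimensional error distribution $\xi(\cdot)$: the ordinary Taylor expansion in $\boldsymbol{\eta}$ must be replaced by pathwise/Hadamard differentiation along submodels, $H$ and $M_i$ reinterpreted as operators on the nuisance tangent space (with $H^{-1}$ becoming the inverse of an information operator), and Donsker/stochastic-equicontinuity conditions verified so that the $o_p(1)$ remainder is uniform over a shrinking neighborhood of $\boldsymbol{\eta}_0$. The orthogonality argument survives unchanged because it is purely a conditional-mean-zero statement about $S_i$, which is precisely what assumption (\ref{eqn60}) delivers. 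In the semi-para (finite-dimensional $\boldsymbol{\eta}$) regime these subtleties evaporate and the argument becomes essentially mechanical once the orthogonality observation above is in hand.
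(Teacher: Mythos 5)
Your proposal follows essentially the same route as the paper's Appendix B proof: a Taylor expansion of the stacked estimating equations, substitution of the asymptotic linear representation $\sqrt{n_o}(\widehat{\boldsymbol{\eta}}-\boldsymbol{\eta}_0)=-H^{-1}n_o^{-1/2}\sum_{x_i>\delta}Q_i+o_p(1)$, and the CLT applied to the resulting influence function $B^{-1}\left(U_{ni}+p^{-2}C^{\top}H^{-1}Q_iI(x_i>\delta)\right)$ with $C=E(D_i^{\top}V_i^{-1}M_i)$. The only notable difference is that you verify the vanishing of the cross term between $U_{ni}$ and the score piece explicitly by conditioning on $\mathbf{w}_i$, whereas the paper obtains the same $\Sigma_U+\Phi$ split by remarking that the score piece is the projection of $B^{-1}U_{ni}$ onto the nuisance tangent space; both are correct, and your semi-semi caveats belong to Theorem 2 rather than this finite-dimensional case.
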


A consistent estimator of $\Sigma_{\beta}$ is obtained
by replacing the expectations in (\ref{eqn300}) with their respective
sample counterparts, akin to (\ref{eqn268}). \ 

Note that $B^{-1}\Sigma_{U}B^{-1}$ is the asymptotic variance for
$\widehat{\boldsymbol{\beta}}\left(  \boldsymbol{\eta}_{0}\right)  $, while
$B^{-1}\Phi B^{-\top}$ is the variance of the projection of the IF\ for
$\widehat{\boldsymbol{\beta}}\left(  \widehat{\boldsymbol{\eta}}\right)  $
onto the nuisance tangent space spanned by the score $Q{_{i}}\left(
\boldsymbol{\eta}\right)  $ (e.g., Chapter 4 of Tsiatis
\cite{tsiatis2006semiparametric}). \ We can also view $B^{-1}\Phi B^{-\top}$
as the additional variability in $\widehat{\boldsymbol{\beta}}\left(
\widehat{\boldsymbol{\eta}}\right)  $ induced by $\widehat{\boldsymbol{\eta}}%
$ due to their variational dependence. \ 

For semiparametric models with an infinite dimensional $\boldsymbol{\eta}$,
{\normalsize we consider the model in (\ref{eqn54}) for the transformed
right-censored }$t_{i}$ of $x_{i}$ {\normalsize under some }monotone
decreasing function $t=T^{-1}(x)$. \ Let $\nu=T^{-1}(\delta)$ denoting the transformed
detection limit $\delta$ for the left-censored $x_i$ to the right-censored $t_{i}$. \ As in Kong and Nan \cite{kong2016semiparametric},
we first estimate $\boldsymbol{\gamma}=\left(  \gamma_{0},\boldsymbol{\gamma
}_{1}^{\top}\right)  $ and then estimate $\xi \left(  \cdot \right)  $ given $\widehat{\boldsymbol{\gamma}}$ nonparametrically using the
Kaplan-Meier (KM) survival function $\widehat{S}\left(  t\right)  $ based on
the residuals $\widehat{\epsilon}_{i}\left(  t\right)  =t-\left(
\widehat{\gamma}_{0}+\widehat{\boldsymbol{\gamma}}_{1}^{\top}\mathbf{z}%
_{i}\right)  $. \ In this case, $\widehat{\xi}\left(  t\right)  $ represents
the empirical CDF\ with $\widehat{\xi}\left(  t\right)  =1-\widehat{S}\left(
t\right)  $. \ 

Given $\widehat{\boldsymbol{\eta}}=\left(  \widehat{\boldsymbol{\gamma}}%
^{\top},\widehat{\xi}\left(  \cdot \right)  \right)  ^{\top}$, we use the same
GEE in (\ref{eqn270}), but with slightly different $\widetilde{h}_{i}$, $D_{i}$ and $S_{i}$ to reflect the changed $\widetilde{h}_{i}$:
\begin{equation}
\widetilde{h}_{i}=\widetilde{h}\left(  \boldsymbol{\beta},\widehat
{\boldsymbol{\eta}}\right)  =\int_{\nu-(\widehat{\gamma}_{0}+\widehat
{\boldsymbol{\gamma}}_{1}^{\top}\mathbf{z}_{i})}^{\tau}g^{-1}((1,T(t+\widehat
{\gamma}_{0}+\widehat{\boldsymbol{\gamma}}_{1}^{\top}\mathbf{z}_{i}%
),\mathbf{u}_{i}^{\top})\boldsymbol{\beta})d\widehat{\xi}(t), \label{eqn340}%
\end{equation}
where $\tau$ is a constant truncation time to rule out the trivial case when the integral value is zero. \ 

As in the parametric case, the GEE estimator
$\widehat{\boldsymbol{\beta}}$ from (\ref{eqn340}) is consistent and
asymptotically normal, as summarized by the following Theorem \ref{thm2} (see
Appendix for a proof). \ 

\begin{theorem}
\label{thm2}Consider models \eqref{eqn54} and \eqref{eqn340}, and denote the
true value of $\boldsymbol{\beta}$ by $\boldsymbol{\beta}_{0}$. \ Under mild
regularity conditions, the GEE estimator $\widehat{\boldsymbol{\beta}}$ of
\eqref{eqn340} converges in outer probability to $\boldsymbol{\beta}_{0}$ and
$n^{1/2}(\widehat{\boldsymbol{\beta}}-\boldsymbol{\beta}_{0})$ converges
weakly to a zero-mean normal random variable. \ 
\end{theorem}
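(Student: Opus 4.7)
The plan is to adopt the two-stage Z-estimator argument for semiparametric models with an infinite-dimensional nuisance (in the spirit of Kong and Nan \cite{kong2016semiparametric}), combined with the SPGLM estimating-function structure already used in Theorem \ref{thm1}. Write the GEE summands as $\psi_i(\boldsymbol{\beta}, \boldsymbol{\eta}) = D_i V_i^{-1} S_i$ with $\widetilde{h}_i$ given by (\ref{eqn340}), so that $U_n(\boldsymbol{\beta}, \boldsymbol{\eta}) = \sum_{i=1}^n \psi_i(\boldsymbol{\beta}, \boldsymbol{\eta}) = \mathbf{0}$, and denote $\boldsymbol{\eta}_0 = (\boldsymbol{\gamma}_0^\top, \xi_0(\cdot))^\top$ with $\widehat{\boldsymbol{\eta}} = (\widehat{\boldsymbol{\gamma}}^\top, \widehat{\xi}(\cdot))^\top$ the semiparametric pilot and Kaplan--Meier estimators.

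First I would establish consistency. By the standard rank-based semiparametric linear regression theory for right-censored responses, $\widehat{\boldsymbol{\gamma}} \to_p \boldsymbol{\gamma}_0$; given $\widehat{\boldsymbol{\gamma}}$, the Kaplan--Meier estimator $\widehat{\xi}$ computed from the residuals is uniformly consistent on $[0, \tau]$ by Gill's theorem. Because $\widetilde{h}_i$ is a continuous functional of $(\boldsymbol{\gamma}, \xi)$ in the Euclidean $\times$ sup-norm topology, a Glivenko--Cantelli argument yields uniform (in $\boldsymbol{\beta}$ over a compact neighbourhood of $\boldsymbol{\beta}_0$) convergence of $n^{-1} U_n(\boldsymbol{\beta}, \widehat{\boldsymbol{\eta}})$ to its population limit $P\psi(\boldsymbol{\beta}, \boldsymbol{\eta}_0)$. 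Identifiability of $\boldsymbol{\beta}_0$ as the unique zero of the population GEE, together with the standard Z-estimator consistency argument, then yields $\widehat{\boldsymbol{\beta}} \to \boldsymbol{\beta}_0$ in outer probability.

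For asymptotic normality, Taylor-expand $\mathbf{0} = U_n(\widehat{\boldsymbol{\beta}}, \widehat{\boldsymbol{\eta}})$ in its first argument to obtain $\sqrt{n}(\widehat{\boldsymbol{\beta}} - \boldsymbol{\beta}_0) = -B^{-1}\, n^{-1/2} U_n(\boldsymbol{\beta}_0, \widehat{\boldsymbol{\eta}}) + o_p(1)$, and decompose
\begin{align*}
n^{-1/2} U_n(\boldsymbol{\beta}_0, \widehat{\boldsymbol{\eta}}) &= \sqrt{n}(\mathbb{P}_n - P)\psi(\boldsymbol{\beta}_0, \boldsymbol{\eta}_0) \\
&\quad + \sqrt{n}(\mathbb{P}_n - P)\{\psi(\boldsymbol{\beta}_0, \widehat{\boldsymbol{\eta}}) - \psi(\boldsymbol{\beta}_0, \boldsymbol{\eta}_0)\} + \sqrt{n}\, P\psi(\boldsymbol{\beta}_0, \widehat{\boldsymbol{\eta}}).
\end{align*}
The first term is asymptotically normal by the multivariate CLT. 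The second (equicontinuity) term is $o_p(1)$ once $\{\psi(\boldsymbol{\beta}_0, \boldsymbol{\eta}) : \boldsymbol{\eta} \in \mathcal{H}\}$ is shown to be $P$-Donsker on a neighbourhood $\mathcal{H}$ of $\boldsymbol{\eta}_0$. The third (drift) term is handled by pathwise (Hadamard) differentiability of $\boldsymbol{\eta} \mapsto P\psi(\boldsymbol{\beta}_0, \boldsymbol{\eta})$: expand linearly in $\widehat{\boldsymbol{\gamma}} - \boldsymbol{\gamma}_0$ and $\widehat{\xi} - \xi_0$ and substitute the IID influence-function representations of $\widehat{\boldsymbol{\gamma}}$ (from the rank-based score) and of $\widehat{\xi}$ (Gill's martingale representation of the KM estimator). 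The combined expression becomes a sample average of IID random vectors, to which the multivariate CLT applies, delivering the stated zero-mean normal limit.

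The main obstacle will be verifying the Donsker property of $\{\psi(\boldsymbol{\beta}_0, \boldsymbol{\eta})\}$, because the integrand of $\widetilde{h}_i$ is a composition of the smooth inverse link $g^{-1}$ with the monotone transformation $T$, integrated against a monotone distribution function $\xi$, with a lower limit of integration that itself depends on $\boldsymbol{\gamma}$ and $\mathbf{z}_i$. My approach is to factor the class into (i) the uniformly Donsker class of CDFs on $[0, \tau]$, (ii) a finite-dimensional smooth parametric piece indexed by $\boldsymbol{\gamma}$, and (iii) a bounded Lipschitz outer transformation, then invoke permanence of the Donsker property under composition with Lipschitz maps, integration, and Cartesian products. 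A secondary subtlety is the Hadamard differentiability of $\xi \mapsto P\psi(\boldsymbol{\beta}_0, (\boldsymbol{\gamma}_0, \xi))$ in the sup-norm, which follows from integration by parts and the smoothness of $g^{-1}$ in its argument.
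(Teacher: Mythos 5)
Your proposal follows essentially the same route as the paper's Web Appendix C: a two-stage Z-estimation argument in the style of Kong and Nan, with consistency from uniform convergence of the estimating function over a shrinking neighbourhood of $(\boldsymbol{\gamma}_0,\xi_0)$, and asymptotic normality from the decomposition into a CLT term, a stochastic-equicontinuity term controlled by a Donsker property (verified via VC classes, Lipschitz compositions of $g^{-1}$ and $T$, and permanence theorems), and a drift term linearized through differentiability in $\boldsymbol{\gamma}$ and Fr\'echet/Hadamard differentiability in $\xi$, followed by the IID representations of $\widehat{\boldsymbol{\gamma}}$ and the Kaplan--Meier $\widehat{\xi}$. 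The argument is correct and matches the paper's proof in both structure and key technical ingredients.
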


Unlike the semi-para combination, the asymptotic variance $\Sigma_{\beta}$ of
$\widehat{\boldsymbol{\beta}}$ is much more complicated for the
semi-semi method. \ For their proposed para-semi combination,
\cite{kong2016semiparametric} recommended the use of bootstrap to facilitate
estimation of the asymptotic variance. \ A more computationally efficient
alternative for the proposed approach is
to leverage sample-splitting and cross-fitting (SSCF) \cite{chernozhukov2018double}. \ For
example, we may split the observed sample into 2 subsamples with sample size $n_1$ and $n_2$ and use the second subsample to obtain $\widehat{\boldsymbol{\eta
}}_{2}$ and the first to estimate $\widehat{\boldsymbol{\beta}}_{1}$. \ Since $\widehat{\boldsymbol{\eta}}_2$ and $\widehat{\boldsymbol{\beta}}_1$ are independent, the asymptotic variance of $\widehat{\boldsymbol{\beta}}_1$ is the variance of the IF $\boldsymbol{\varphi}_i\left(y_i, \mathbf{w}_i ; \boldsymbol{\beta}_0, \boldsymbol{\eta}_0\right)$, which is consistently estimated by:
\[
\widehat{\Sigma}_{\boldsymbol{\beta}_{1}}^{(1)}=\frac{1}{n_{1}}\sum_{i=1}^{n_{1}}%
\boldsymbol{\varphi}_{i}\left(  y_{i},\mathbf{w}_{i};\widehat
{\boldsymbol{\beta}}_{1},\widehat{\boldsymbol{\eta}}_{2}\right)
\boldsymbol{\varphi}_{i}^{\top}\left(  y_{i},\mathbf{w}_{i};\widehat
{\boldsymbol{\beta}}_{1},\widehat{\boldsymbol{\eta}}_{2}\right)  =\frac
{1}{n_{1}}\sum_{i=1}^{n_{1}}\widehat{D}_{i}^{\top}\widehat{V}_{i}^{-1}%
\widehat{S}_{i}\widehat{S}_{i}^{\top}\widehat{V}_{i}^{-1}\widehat{D}_{i}%
\]
By switching the role of the two subsamples, we obtain $\widehat{\Sigma}_{\boldsymbol{\beta}_{1}}^{(2)}$ and
estimate the variance for the total sample by a weighted average of $\widehat{\Sigma}_{\boldsymbol{\beta}_{1}}^{(1)}$ and $\widehat{\Sigma}_{\boldsymbol{\beta}_{1}}^{(2)}$. \ 

Compared with the para-semi as in \cite{kong2016semiparametric}, where the score equations for estimating $\boldsymbol{\beta}$ involves integraing out missing $x_i$ in the complete-data score, the proposed semi-semi only performs this integration for the conditional mean $h\left(  x_{i},\mathbf{u}_{i};\boldsymbol{\beta}\right)$, making it much easier to implement. It is this feature coupled with SSCF that enables the approach to take a tiny fraction of computational time needed for the semi-semi alternative to provide inference about $\boldsymbol{\beta}$.   

With Theorem 1 and 2, we can test any linear hypotheses concerning components
of $\boldsymbol{\beta}$ by :
\begin{equation}
H_{0}:C\boldsymbol{\beta}=\mathbf{b}\quad \text{vs.}\quad H_{a}%
:C\boldsymbol{\beta}\neq \mathbf{b,} \label{eqn360}%
\end{equation}
where $C$ is a matrix and $\mathbf{b}$ is a vector of known constants. \ For
example, the null $H_{0}:\boldsymbol{\beta=0}$ for the primary model in
(\ref{eqn10}) can be expressed in this form by setting $\mathbf{C=I}_{p}$
where $p$ denotes the dimension of $\boldsymbol{\beta}$ and $\mathbf{I}_{p}$
the $p\times p$ identity matrix. \ The null of the linear hypothesis in
(\ref{eqn360}) is readily tested using, say the Wald statistic:\
\begin{equation}
W_{n}=n\left(  C\widehat{\boldsymbol{\theta}}-\mathbf{b}\right)  ^{\top
}\left(  C\widehat{\mathbf{\Sigma}}_{\boldsymbol{\theta}}C^{\top}\right)
^{-1}\left(  C\widehat{\boldsymbol{\theta}}-\mathbf{b}\right)  \rightarrow
_{d}\chi_{s}^{2}, \label{eqn380}%
\end{equation}
where $s$ is the rank of $C$ and $\chi_{s}^{2}$ denotes a (central) $\chi^{2}$
distribution with $s$ degrees of freedom. \ 

\section{Simulation Study\label{sec4}}

We start by assessing robustness of $\boldsymbol{\beta}$ by considering two
different parametric models for the error distribution for the primary
component under correctly specified parametric model for the second component
in (\ref{eqn70}). \ In all examples, we set type I error $\alpha=0.05$ and
Monte Carlo (MC)\ sample size $M=1,000$. \ 

\subsection{Robustness of Semiparametric Model for Primary
Component\label{sec4.1}}

For the primary component, we consider two exploratory variables with no
DL-effected missing, one continuous $u_{1i}$ and one binary $u_{2i}$, and with
errors following a normal and centered chi-square distribution:
\begin{align}
y_{i}  &  =\beta_{0}+\beta_{1}x_{i}+\beta_{2}u_{1i}+\beta_{3}u_{2i}%
+\epsilon_{yi},\text{\quad}u_{1i}\sim N\left(  \mu_{u},\sigma_{u}^{2}\right)
,\text{\quad}u_{2i}\sim Bern\left(  p_{u}\right)  ,\label{eqn400}\\
&  \epsilon_{yi}\overset{iid}{\sim}N\left(  0,\sigma_{y}^{2}\right)
\text{\quad or\quad}\epsilon_{yi}\overset{iid}{\sim}\left(  \chi_{s}%
^{2}-s\right)  \sqrt{\frac{\sigma_{y}^{2}}{2}},\nonumber
\end{align}
where $Bern\left(  p\right)  $ denotes a Bernoulli with mean $p$, $\chi
_{s}^{2}$ a chi-square with $s$ degrees of freedom. \ For the auxiliary
component, we consider a continuous $z_{1i}$ and a binary $z_{2i}$ explanatory
variable, and a normal error:
\begin{align}
x_{i}  &  =\gamma_{0}+\gamma_{1}z_{1i}+\gamma_{2}z_{2i}+\epsilon_{xi}%
,\quad \epsilon_{xi}\sim N\left(  0,\sigma_{x}^{2}\right)  ,\label{eqn420}\\
&  z_{1i}\sim N\left(  \mu_{z},\sigma_{z}^{2}\right)  ,\text{\quad}z_{2i}\sim
Bern\left(  p_{z}\right)  .\nonumber
\end{align}
We set sample size at $n=100$, $200$ and $500$ to represent small, moderate
and large samples, and missing percentage $\Pr \left(  x_{i}\leq \delta \right)
$ at 10\%, 30\% and 60\% to represent low, moderate and high levels of
missingness. \ All the other model parameters are set as:
\begin{align}
\beta_{0}  &  =\beta_{1}=\beta_{2}=\beta_{3}=1,\quad \gamma_{0}=\gamma
_{1}=\gamma_{2}=1,\quad \sigma_{y}^2=1,\quad \sigma_{x}^2=0.2,\label{eqn440}%
\\
\mu_{u}  &  =0,\quad \sigma_{u}^2=1,\quad p_{u}=0.5,\quad \mu_{z}%
=0,\quad \sigma_{z}^2=1,\quad p_{z}=0.5.\nonumber
\end{align}

Let $\widehat{\boldsymbol{\beta}}^{\left(  m\right)  }$ denote the estimated
$\boldsymbol{\beta}=\left(  \beta_{0},\beta_{1},\beta_{2},\beta_{3}\right)
^{\top}$ and $\widehat{\mathbf{\Sigma}}_{\beta}^{\left(  m\right)  }$ the
estimated asymptotic variance from the $m$th MC iteration. \ The MC estimator
$\widehat{\boldsymbol{\beta}}$ and associated asymptotic variance
$\widehat{\mathbf{\Sigma}}_{\beta}^{\left(  asymp\right)  }$ are the averaged
values of the respective $\widehat{\boldsymbol{\beta}}^{\left(  m\right)  }$
and $\widehat{\mathbf{\Sigma}}_{\beta}^{\left(  m\right)  }$ over the $M$ MC
iterations. \ The empirical variance $\widehat{\mathbf{\Sigma}}_{\beta
}^{\left(  emp\right)  }$ is estimated as the sample variance of
$\widehat{\boldsymbol{\beta}}^{\left(  m\right)  }$ across the MC iterations. \ 

We applied the approach for the semi-para combination for inference about the parameters. \ For space
considerations, we only report results for inference about $\beta_{1}$. \ In
addition to comparing the estimates and variances, we also assess type I
errors from testing the null:\ $H_{0}:\beta_{1}=\beta_{10}$, where $\beta
_{10}$ denotes the true value of $\beta_{1}$ in (\ref{eqn440}). \ The Wald
statistic $w_{n}^{\left(  m\right)  }=\frac{\widehat{\beta}_{1}^{\left(
m\right)  }-\beta_{10}}{\sqrt{\widehat{\mathbf{\sigma}}_{\beta_{1}}^{2\left(
m\right)  }}}$ at the $m$th MC iteration is approximately a standard normal,
where $\widehat{\mathbf{\sigma}}_{\beta_{1}}^{2\left(  m\right)  }$ denotes
the asymptotic variance of $\widehat{\beta}_{1}^{\left(  m\right)  }$ under
$H_{0}$. \ Thus the type I\ error rate over MC\ iterations, $\widehat{\alpha
}^{W}=\frac{1}{M}\sum_{m=1}^{M}I\left(  w_{n}^{\left(  m\right)  }\geq
q_{1,0.95}\right)  $, should be close to the nominal level $\alpha=0.05$ at
least for large samples, where $q_{s,0.95}$ denotes the $95$th percentile of a
central $\chi_{s}^{2}$. \ 

Shown in Table \ref{tab1} are MC\ estimates of $\beta_{1}$, along with
asymptotic and empirical standard errors, and type I\ errors for testing the
$H_{0}:\beta_{1}=\beta_{10}$. \ For both error distributions and all three
sample sizes, the estimates $\widehat{\boldsymbol{\beta}}_{1}$ were quite
close to the true $\beta_{10}=1$. \ The asymptotic and empirical standard
errors were also close to each other for both error distributions even with
60\% missing data, especially for larger $n$. \ The type I\ error rates were
close to the nominal value as well. \  \begin{table}[ptb]
\caption{\ MC estimates of $\beta_{1}$, standard errors (asymptotic and
empirical), and type I\ errors under the null. \ }%
\label{tab1}
\begin{center}
$%
\begin{tabular}
[c]{|c|c|c|c|}\hline
\multicolumn{4}{|c|}{Estimates of $\beta_{1}$, Standard Errors and Type I
Errors for Testing $H_{0}:$ $\beta_{1}=1$}\\ \hline
\multicolumn{4}{|c|}{With 10\%/30\%/60\% Missing Data due to Detection
Limit}\\ \hline
Estimate & \multicolumn{2}{|c}{Standard Error $\left(  \times10^{-2}\right)
$} & Testing $H_{0}$\\ \hline
& Asymptotic & Empirical & Type I error\\ \hline
\multicolumn{4}{|c|}{$n=100$}\\ \hline
\multicolumn{4}{|c|}{Normal Error}\\ \hline
$0.993/1.002/1.011$ & $0.730/1.020/3.513$ & $0.729/1.074/3.00$ &
$0.059/0.067/0.047$\\ \hline
\multicolumn{4}{|c|}{Centered Chi-square Error}\\ \hline
$0.997/1.002/1.006$ & $0.743/1.019/3.646$ & $0.776/0.989/2.686$ &
$0.057/0.052/0.038$\\ \hline
\multicolumn{4}{|c|}{$n=200$}\\ \hline
\multicolumn{4}{|c|}{Normal Error}\\ \hline
$0.997/0.997/1.005$ & $0.368/0.498/1.566$ & $0.397/0.499/1.467$ &
$0.060/0.060/0.055$\\ \hline
\multicolumn{4}{|c|}{Centered Chi-square Error}\\ \hline
$1.000/1.000/0.994$ & $0.363/0.501/1.552$ & $0.394/0.518/1.303$ &
$0.050/0.048/0.047$\\ \hline
\multicolumn{4}{|c|}{$n=500$}\\ \hline
\multicolumn{4}{|c|}{Normal Error}\\ \hline
$0.998/0.998/1.000$ & $0.148/0.198/0.571$ & $0.150/0.206/0.546$ &
$0.047/0.063/0.051$\\ \hline
\multicolumn{4}{|c|}{Centered Chi-square Error}\\ \hline
$0.998/0.997/1.002$ & $0.148/0.197/0.566$ & $0.148/0.197/0.566$ &
$0.064/0.052/0.042$\\ \hline
\end{tabular}
\  \  \  \ $
\end{center}
\end{table}

\subsection{Comparison of Parametric and Semiparametric Model for Auxiliary
Component\label{sec4.2}}

We continue to assume (\ref{eqn400}) for the primary component.
\ For the auxiliary component, we also considered a semiparametric model for the
transformed $t_{i}=-\log \left(  x_{i}\right)  $:%
\begin{align}
t_{i}  &  =\gamma_{0}+\gamma_{1}z_{1i}+\gamma_{2}z_{2i}+\epsilon_{xi}%
,\quad \epsilon_{xi}\overset{iid}{\sim}N\left(  0,\sigma_{x}^{2}\right)
,\label{eqn_t}\\
&  z_{1i}\overset{iid}{\sim}Bern\left(  0.5\right)  ,\text{\quad}%
z_{2i}\overset{iid}{\sim}N\left(  1,1\right)  ,\nonumber
\end{align}
with monotone decreasing transformation $T(t_{i})=\exp(-t_{i})$. We set sample size at $n=200$ and $400$ and missing percentage at 30\%. \ All
the other parameters are set as:
\[
\beta_{1}=2\quad \beta_{2}=0.5,\quad \beta_{0}=\beta_{3}=-1,\quad \gamma
_{0}=\gamma_{1}=0.25,\quad \gamma_{2}=-0.5,\quad \sigma_{y}^2=1,\quad
\sigma_{x}^2=0.01
\]

Shown in Table \ref{tab_compare} are the bias, asymptotic and empirical
standard errors of the parameters for the two semi-semi and semi-para methods
from MC\ simulations. \ For comparison purposes, Table \ref{tab_compare} also
shows these values from applying the primary model in (\ref{eqn10}) to the full
and observed data. \ For both sample sizes, the biases from all methods were
close to 0, although the ones from the observed data approach were slightly larger.
\ From the table, the empirical standard errors from the semi-para method were
slightly smaller than their semi-semi counterparts. \ For the semi-para model,
the empirical standard errors were also quite close to their asymptotic
counterparts. \ 

\  \begin{table}[ptb]
\caption{\ Comparison of bias, asymptotic and empirical standard errors of the
proposed semi-semi and semi-para methods with full and observed data based on
MC estimates of $\beta_{1}$. }%
\label{tab_compare}
\begin{center}
$%
\begin{tabular}
[c]{|c|c|c|c|}\hline
\multicolumn{4}{|c|}{Comparison of Semi-semi and Semi-para}\\ \hline
\multicolumn{4}{|c|}{Models with 30\% Missing Data due to Detection
Limit}\\ \hline
& Bias & \multicolumn{2}{|c|}{Standard Error}\\ \hline
&  & Asymptotic & Empirical\\ \hline
\multicolumn{4}{|c|}{$n=200$}\\ \hline
Full Data & $0.0038$ & $0.276$ & $0.296$\\ \hline
Semi-semi & $0.0017$ & $0.283$ & $0.299$\\ \hline
Semi-para & $0.0053$ & $0.277$ & $0.297$\\ \hline
Observed Data & $0.0068$ & $0.470$ & $0.484$\\ \hline
\multicolumn{4}{|c|}{$n=400$}\\ \hline
Full Data & $-0.0080$ & $0.205$ & $0.202$\\ \hline
Semi-semi & $-0.0101$ & $0.209$ & $0.205$\\ \hline
Semi-para & $-0.0071$ & $0.207$ & $0.204$\\ \hline
Observed Data & $-0.0135$ & $0.329$ & $0.329$\\ \hline
\end{tabular}
\  \  \  \  \  \ $
\end{center}
\end{table}

\subsection{Efficiency Gains\label{sec4.3}}

We now investigate efficiency gains under different parameter settings through
power analysis. \ To evaluate power, we continue to simulate data from
(\ref{eqn400}) and (\ref{eqn420}). \ Since efficiency gains depend on
$\sigma_{x}^{2}$ relative to $\sigma_{y}^{2}$, we fix $\sigma_{y}^{2}=1$ and
vary $\sigma_{x}^{2}$ from $\sigma_{x}^{2}=0.1$ to $0.5$, $1$ and $5$, while
keeping all the other parameter values the same as in (\ref{eqn440}), except
for $\beta_{1}$, since we consider testing the hypothesis concerning this
parameter:\
\begin{equation}
H_{0}:\beta_{1}=0\text{\quad vs.\quad}H_{a}:\beta_{1}=b \label{eqn480}%
\end{equation}
where $b$ $\left(  \neq0\right)  $ is a known constant. \ By simulating data
from $\beta_{1}=1.1$ and testing the null $\beta_{1}=0$, we estimate power
by:
\[
\text{power estimate}=\frac{\text{Number of times }H_{0}\text{ is rejected }%
}{M}%
\]
where $M=1,000$ is the number of MC simulations. \ Since the semi-para
provides the most efficiency gains in real data analysis, we only report power
estimates for this method. \ 

Shown in Table \ref{tab2} are MC\ power estimates for testing the hypothesis
in (\ref{eqn480}) for the semi-para method and observed data analysis with
sample size $n=500$. \ Power estimates were quite similar between the normal
and centered chi-square errors. \ As expected, power decreased as the percent of
missing due to DL\ increased from 10\% to 30\% to 60\%. \ When the variance
ratio $\frac{\sigma_{x}^{2}}{\sigma_{y}^{2}}=0.1$, the proposed method
improved efficiency for all missingness levels. \ As $\frac{\sigma_{x}^{2}%
}{\sigma_{y}^{2}}$ increased, efficiency gains were still observed for 10\%
missing but diminished and even became negative for 30\% and 60\% missing.
\ This is expected as efficiency gains relies critically on the variance ratio
$\frac{\sigma_{x}^{2}}{\sigma_{y}^{2}}$. \ As $\frac{\sigma_{x}^{2}}%
{\sigma_{y}^{2}}$ increases, the variability of the estimator $\widehat{\beta
}_{1}$ increases, reducing the efficiency gains until the proposed method
eventually becomes less efficient. \  \begin{table}[ptb]
\caption{\ MC estimates of power for testing the null $H_{0}:\beta_{1}=0$
under the semi-para method. \ }%
\label{tab2}
\begin{center}
$%
\begin{tabular}
[c]{|c|c|c|}\hline
\multicolumn{3}{|c|}{Comparison of Power for Testing $H_{0}:\beta_{1}=0$ vs.
$\ H_{0}:\beta_{1}=b$}\\ \hline
\multicolumn{3}{|c|}{Between Semi-para Model and Observed Data Analysis}%
\\ \hline
\multicolumn{3}{|c|}{With 10\%/30\%/60\% Missing due to DL}\\ \hline
Error Distribution & Observed Data & Semi-para\\ \hline
& \multicolumn{2}{|c|}{$\frac{\sigma_{x}^{2}}{\sigma_{y}^{2}}=0.1$}\\ \hline
Normal & $0.556/0.332/0.161$ & $0.710/0.603/0.318$\\ \hline
Centered Chi-square & $0.572/0.336/0.140$ & $0.711/0.651/0.339$\\ \hline
& \multicolumn{2}{|c|}{$\frac{\sigma_{x}^{2}}{\sigma_{y}^{2}}=0.5$}\\ \hline
Normal Error & $0.668/0.433/0.178$ & $0.781/0.542/0.131$\\ \hline
Centered Chi-square & $0.671/0.433/0.182$ & $0.784/0.515/0.141$\\ \hline
& \multicolumn{2}{|c|}{$\frac{\sigma_{x}^{2}}{\sigma_{y}^{2}}=1$}\\ \hline
Normal Error & $0.767/0.517/0.223$ & $0.857/0.484/0.118$\\ \hline
Centered Chi-square & $0.789/0.527/0.209$ & $0.866/0.481/0.119$\\ \hline
& \multicolumn{2}{|c|}{$\frac{\sigma_{x}^{2}}{\sigma_{y}^{2}}=5$}\\ \hline
Normal Error & $0.988/0.896/0.524$ & $0.988/0.545/0.137$\\ \hline
Centered Chi-square & $0.993/0.925/0.541$ & $0.992/0.551/0.159$\\ \hline
\end{tabular}
\  \  \  \  \  \  \ $
\end{center}
\end{table}

\section{Real Study\label{sec5}}

The proposed approach was conducted within the Study of Secondary Exposures to Pesticides Among Children and Adolescents (ESPINA; Estudio de la Exposici\'{o}n Secundaria a Plaguicidas en
Ni\~{n}os y Adolescentes [Spanish]) \cite{suarez2012lower}. \ This ESPINA study is a prospective cohort study established in
2008 in the agricultural county of Pedro Moncayo, Ecuador, that aims to assess effects of pesticide exposures in children living in agricultural settings. \ One of the primary objectives of this analysis is
to examine association of acetylcholinestaerase (AChE) activity with
urinary concentrations of a phenoxy acid herbicide, 2,4-Dichlorophenoxyacetic acid (2,4-D), as our case study metabolite, for which about one-third of the samples are censored due to DL. \ For our illustration, the data included 523 participants with information on urinary concentrations of 2,4-D, other non-missing metabolites, and relevant covariates.

For the\ primary association between AChE and 2,4-D, the SPGLM in
(\ref{eqn10}) includes log transformed 2,4-D (log(2,4-D)) as
well as other explanatory variables including age, gender, race, BMI-for-age z-score
(ZBA), height-for-age z-score (ZHA) and hemoglobin (Hgb). \ Unlike 2,4-D, none
of the other explanatory variables is subject to missing due to DL. \ For the
auxiliary component, we model $x=$ log(2,4-D)   using both the
parametric (\ref{eqn50}) with normal errors and semiparametric (\ref{eqn54})
with the monotone decreasing transformation $x=T(t)=-t$, with the explanatory
variables: creatinine, age, race, ZHA, log-distance (log transformed distance
between the subject's residence and the nearest agricultural crop) and cohabitation (a
binary indicating if any of the subject's family members worked in
plantations). \ 

Shown in Table \ref{tab3} are estimated coefficients, along with
their standard errors and p-values, for the parametric (\ref{eqn50}) and
semiparametric (\ref{eqn54}) auxiliary components. \ The estimates and standard
errors were slightly different between the two models, but with the same
directions and statistical
significance for all the variables. \ Only creatinine and age had significant
associations with (log(2,4-D)) in both models. \ Even though the
remaining variables were not significant, we chose to include them based on a
priori content knowledge. \  \begin{table}[ptb]
\caption{\ Estimates of parameters, standard errors and p-values from
parametric and semiparametric auxiliary models used to model log(2,4-D) with creatinine, age, race, ZHA, log-distance and cohabitation in ESPINA study. \ }%
\label{tab3}
\begin{center}%
\begin{tabular}
[c]{|l|rrr|rrr|}\hline
Explanatory variables & \multicolumn{3}{c|}{Parametric model} &
\multicolumn{3}{c|}{Semiparametric model}\\ \hline
& Estimate & Std. Error & p-value & Estimate & Std. Error & p-value\\ \hline
Creatinine & $-0.0057$ & $0.0007$ & $<0.001$ & $-0.0079$ & $0.0012$ &
$<0.001$\\
Age & $0.0868$ & $0.0255$ & $<0.001$ & $0.0923$ & $0.0363$ & $0.011$\\
Race & $-0.0313$ & $0.1076$ & $0.771$ & $-0.1036$ & $0.1493$ & $0.488$\\
ZHA & $0.0706$ & $0.0474$ & $0.137$ & $0.0821$ & $0.0811$ & $0.312$\\
Log-Distance & $-0.0422$ & $0.0273$ & $0.122$ & $-0.0763$ & $0.0419$ &
$0.069$\\
Cohabitation & $-0.0191$ & $0.0918$ & $0.835$ & $-0.0478$ & $0.1303$ &
$0.713$\\ \hline
\end{tabular}
$\  \  \  \  \  \  \ $
\end{center}
\end{table}

Shown in Table \ref{tab4} are estimated coefficients, along with
standard errors and p-values, for the primary regression model obtained from the observed
data, semi-para, and semi-semi methods. \ All estimates from the two methods
were in the same directions and similar to their counterparts from the
observed data analysis, except for ZHA and ZBA, which were not significant in
all methods. \ For 2,4-D, the standard errors of the estimates and associated
p-values from both proposed methods were smaller than their counterparts from the
observed data analysis, indicating efficiency gains from the proposed approach.
\ The most notable difference was the conclusion about the effect of 2,4-D on
AChE; while it was not significant in the observed data\ analysis ($p=0.122$),
it became quite significant in the semi-para ($p=0.019$) and borderline
significant in the semi-semi ($p=0.076$) method. \ We also applied Kong and Nan's approach \cite{kong2016semiparametric} to our data analysis and obtained similar results as semi-semi method. \ However, their method required approximately 7.5 hours to complete, while ours took less than one minute.

\begin{table}[ptb]
\caption{\ Estimates of parameters, standard errors and p-values from
primary models used to model AChE and log(2,4-D), age, gender, race, ZHA, ZBA and Hgb in ESPINA study based on the two methods and observed data analysis.\ }%
\label{tab4}
\begin{center}
$%
\begin{tabular}
[c]{|c|c|c|c|c|c|c|c|}\hline
\multicolumn{8}{|c|}{Estimates of Parameters, Standard Errors, and p-values
for}\\ 
\multicolumn{8}{|c|}{Primary Model}\\ \hline
Explanatory & age & gender & race & ZHA & ZBA & Hgb & log(2,4-D)\\ \hline
variables &  &  &  &  &  &  & \\ \hline
& \multicolumn{7}{|c|}{Observed Data}\\ \hline
Estimate & $0.042$ & $0.124$ & $0.116$ & $0.003$ & $-0.018$ & $0.271$ &
$0.046$\\ \hline
Std. Error & $0.013$ & $0.046$ & $0.053$ & $0.024$ & $0.027$ & $0.020$ &
$0.030$\\ \hline
p-value & $0.001$ & $0.007$ & $0.028$ & $0.910$ & $0.498$ & $<0.001$ &
$0.122$\\ \hline
& \multicolumn{7}{|c|}{Semi-para}\\ \hline
Estimate & $0.034$ & $0.115$ & $0.130$ & $-0.023$ & $0.020$ & $0.270$ &
$0.052$\\ \hline
Std. Error & $0.012$ & $0.041$ & $0.043$ & $0.025$ & $0.022$ & $0.022$ &
$0.022$\\ \hline
p-value & $0.004$ & $0.005$ & $0.002$ & $0.357$ & $0.367$ & $<0.001$ &
$0.019$\\ \hline
& \multicolumn{7}{|c|}{Semi-semi}\\ \hline
Estimate & $0.036$ & $0.112$ & $0.130$ & $-0.018$ & $0.017$ & $0.270$ &
$0.037$\\ \hline
Std. Error & $0.011$ & $0.041$ & $0.043$ & $0.025$ & $0.022$ & $0.023$ &
$0.021$\\ \hline
p-value & $0.002$ & $0.006$ & $0.003$ & $0.467$ & $0.454$ & $<0.001$ &
$0.076$\\ \hline
\end{tabular}
\  \  \  \  \  \  \  \ $
\end{center}
\end{table}

\section{Discussion\label{sec6}}

In this paper, we developed a new approach for semiparametric regression analysis with a
DL-effected explanatory variable. \ By extending the primary component of
interest to semiparametric GLM, the proposed approach provides more robust
inference over existing alternatives. \ This extension is more critical than extending the auxiliary model to semiparametric regression as in Kong and Nan \cite{kong2016semiparametric}, since estimators based on the observed data are consistent for the primary component, if conditional mean model is correctly specified. \ We also considered two variations of the auxiliary component, allowing for both semiparametric and parametric models for left-censored data. \ Although the semi-semi combination is
more robust, the semi-para model is also a useful alternative. \ As the
primary objective of using surrogates is to improve power, the semi-para method
provides more efficiency gains, as illustrated by both the simulated and real
data examples. \ In practice, one may try different regression models for the
auxiliary component. \ If\ estimates of regression coefficients for the
primary component are similar between the semi-semi and one or more of the
semi-para models, one may report results from the semi-para model with most
similar parameter estimates for efficiency gains. \ Work is underway to
develop goodness of fit statistics to formalize such a model selection
process. \ 

In this work we leveraged SSCF to address the computational challenge for inference. \ Compared to bootstrap, SSCF provides inference about parameters of interest for the primary component within our setting, rather than all parameters for both components. \ This advantage arises because it is difficult to analytically derive the efficient IF for the parameters of interest in the primary component with their variational dependence on the nuisance parameters for the semiparametric auxiliary component. \ SSCF circumvents such difficulty by estimating variational-dependent parameters using two independent subsample. \ This approach significantly improves computational efficiency over bootstrap by 450 times. \ Indeed, without using SSCF, it would have been really difficult to examine performance of the proposed approach using Monte Carlo simulations, as in Kong and Nan \cite{kong2016semiparametric} in which only empirical variances were reported.  \

We focused on one explanatory variable with DL-effected missing and
cross-sectional data. \ In many real studies, there are often multiple such
explanatory variables in regression analysis. \ By using an auxiliary model
for each of such variables, we may extend the approach to this general
setting. \ By modeling the marginal mean of a longitudinal response at each
assessment using the semiparametric model in (\ref{eqn10}), we may also extend
the approach to longitudinal data setting, with inference based on GEE or
weighted GEE for longitudinal data \cite{tang2023applied}. \ In comparison, it is
much more difficult to extend a parametric primary model to longitudinal data
using parametric generalized linear mixed-effects models (GLMM) because of
computational challenges due to high-dimensional integration. \ For example,
popular software packages such as R\ and SAS\ do not even provide reliable
inference when GLMM\ is applied to longitudinal binary responses \cite{chen2016power,tang2023applied,zhang2011fitting,lin2023modelling}. \ Work is underway to develop these extensions. \

\pagebreak

\newpage
\appendix

\section*{ Web Appendix A. Derivation of the expectation of truncated normal}

\label{appendix.a} The truncated density of $x_{i}$ for the model
(6)\ with a normal error is given by:
\[
f\left(  x_{i}\mid \mathbf{z}_{i},x_{i}>\delta \right)  =\frac{\phi_{\mu
_{i},\sigma_{x}^{2}}\left(  x\right)  I\left(  x>\delta \right)  }{1-\Phi
_{\mu_{i},\sigma_{x}^{2}}\left(  \delta \right)  }%
\]
Thus we have:\
\begin{align*}
E\left(  x_{i}\mid \mathbf{z}_{i},x_{i}>\delta \right)   &  =\int_{-\infty
}^{\infty}xf\left(  x\mid \mathbf{z}_{i},x_{i}\leq \delta \right)  dx=\int
_{\delta}^{\infty}\frac{x\phi_{\mu_{i},\sigma_{x}^{2}}\left(  x\right)
}{1-\Phi_{\mu_{i},\sigma_{x}^{2}}\left(  \delta \right)  }dx\\
&  =\frac{1}{1-\Phi_{\mu_{i},\sigma_{x}^{2}}\left(  \delta \right)  }\left(
\int_{\delta}^{\infty}(x-\mu_{i})\phi_{\mu_{i},\sigma_{x}^{2}}\left(
x\right)  dx+\int_{\delta}^{\infty}\mu_{i}\phi_{\mu_{i},\sigma_{x}^{2}}\left(
x\right)  dx\right)  .
\end{align*}
Since $\frac{d}{dx}\phi_{\mu_{i},\sigma_{x}^{2}}\left(  x\right)
=-\frac{(x-\mu_{i})}{\sigma_{x}^{2}}\phi_{\mu_{i},\sigma_{x}^{2}}\left(
x\right)  $, the above simplifies to:
\begin{align*}
E\left(  x_{i}\mid \mathbf{z}_{i},x_{i}>\delta \right)  &=\frac{1}{1-\Phi
_{\mu_{i},\sigma_{x}^{2}}\left(  \delta \right)  }\left(  \int_{\delta}%
^{\infty}-\sigma_{x}^{2}\frac{d}{dx}\phi_{\mu_{i},\sigma_{x}^{2}}\left(
x\right)  dx+\mu_{i}\left(  1-\Phi_{\mu_{i},\sigma_{x}^{2}}\left(
\delta \right)  \right)  \right) \\
& =\mu_{i}+\sigma^{2}\frac{\phi_{\mu_{i}%
,\sigma_{x}^{2}}(\delta)}{1-\Phi_{\mu_{i},\sigma_{x}^{2}}\left(
\delta \right)  }.
\end{align*}

Similarly, we have:%
\[
E\left(  x_{i}\mid \mathbf{z}_{i},x_{i}\leq \delta \right)  =\frac{1}{\Phi
_{\mu_{i},\sigma_{x}^{2}}\left(  \delta \right)  }\left(  \int_{-\infty
}^{\delta}-\sigma_{x}^{2}\frac{d}{dx}\phi_{\mu_{i},\sigma_{x}^{2}}\left(
x\right)  dx+\mu_{i}\Phi_{\mu_{i},\sigma_{x}^{2}}\left(  \delta \right)
\right)  =\mu_{i}-\sigma_{x}^{2}\frac{\phi_{\mu_{i},\sigma_{x}^{2}}(\delta
)}{\Phi_{\mu_{i},\sigma_{x}^{2}}(\delta)}.
\]

\section*{ Web Appendix B. Proof of Theorem 1.}

\label{appendix.b} Let $\widehat{\mathbf{\eta}}$ denote the MLE of
$\mathbf{\eta}$ from solving the score equations:
\[
Q_{n_{o}}\left(  \mathbf{\eta}\right)  =\sum_{x_{i}>\delta}Q_{i}%
=\mathbf{0}%
\]
\ By the asymptotic linear property of the MLE, we have:%
\[
\sqrt{n_{o}}\left(  \widehat{\mathbf{\eta}}-\mathbf{\eta}\right)
=-H^{-1}\frac{\sqrt{n_{o}}}{n_{o}}Q_{n_{o}}\left(  \mathbf{\eta}\right)
+\mathbf{o}_{p}\left(  1\right)  =-H^{-1}\frac{\sqrt{n_{o}}}{n_{o}}\sum
_{x_{i}>\delta}Q_{i}+\mathbf{o}_{p}\left(  1\right)  .
\]
where $H=E\left(  \frac{\partial}{\partial \mathbf{\eta}}Q_{i}\right)  $
and $\mathbf{o}_{p}\left(  \cdot \right)  $\ denotes the stochastic
$\mathbf{o}\left(  \cdot \right)  $ (Kowalski and Tu, 2007). \ By applying a
Taylor series expansion to $U_{n}\left(  \widehat{\mathbf{\beta}}%
,\widehat{\mathbf{\eta}}\right)  $ in (11) in the main text,\ we have:\
\begin{equation}
\frac{\sqrt{n}}{n}U_{n}=\left(  -\frac{1}{n}\frac{\partial}{\partial
\mathbf{\beta}^{\top}}U_{n}\right)  ^{\top}\sqrt{n}\left(  \widehat
{\mathbf{\beta}}-\mathbf{\beta}\right)  +\left(  -\frac{1}{n}%
\frac{\partial}{\partial \mathbf{\eta}^{\top}}U_{n}\right)  ^{\top}\sqrt
{n}\left(  \widehat{\mathbf{\eta}}-\mathbf{\eta}\right)
+\mathbf{o}_{p}\left(  1\right)  \label{a.1}%
\end{equation}
Since
\begin{align*}
-\frac{1}{n}\frac{\partial}{\partial \mathbf{\beta}^{\top}}U_{n} &
\rightarrow_{p}E\left(  D_{i}^{\top}V_{i}^{-1}D_{i}\right)  =B,\\
-\frac{1}{n}\frac{\partial}{\partial \mathbf{\eta}^{\top}}U_{n} &
\rightarrow_{p}-E\left[  \frac{\partial}{\partial \mathbf{\eta}^{\top}%
}\left(  D_{i}^{\top}V_{i}^{-1}S_{i}\right)  \right]  =E\left[  D_{i}^{\top
}V_{i}^{-1}\frac{\partial}{\partial \mathbf{\eta}^{\top}}g_{i}\left(
\mathbf{w}_{i}\mathbf{;}\mathbf{\beta},\mathbf{\eta}\right)  \right]
\\
&  =E\left(  D_{i}^{\top}V_{i}^{-1}M_{i}\right)  =C
\end{align*}
(\ref{a.1})\ can be expressed as:%
\[
\frac{\sqrt{n}}{n}U_{n}=B^{\top}\sqrt{n}\left(  \widehat{\mathbf{\beta}%
}-\mathbf{\beta}\right)  +C^{\top}\sqrt{n}\left(  \widehat
{\mathbf{\eta}}-\mathbf{\eta}\right)  +\mathbf{o}_{p}\left(  1\right)
\]
Solving the above for $\sqrt{n}\left(  \widehat{\mathbf{\beta}%
}-\mathbf{\beta}\right)  $ yields:\
\begin{align}
\sqrt{n}\left(  \widehat{\mathbf{\beta}}-\mathbf{\beta}\right)   &
=B^{-1}\left(  \frac{\sqrt{n}}{n}U_{n}-C^{\top}\sqrt{n}\left(  \widehat
{\mathbf{\eta}}-\mathbf{\eta}\right)  \right)  +\mathbf{o}_{p}\left(
1\right)  \label{a.2}\\
&  =B^{-1}\left(  \frac{\sqrt{n}}{n}\sum_{i=1}^{n}U_{ni}-C^{\top}\frac
{\sqrt{n}}{\sqrt{n_{0}}}\sqrt{n_{o}}\left(  \widehat{\mathbf{\eta}%
}-\mathbf{\eta}\right)  \right)  +\mathbf{o}_{p}\left(  1\right)
\nonumber \\
&  =B^{-1}\left(  \frac{\sqrt{n}}{n}\sum_{i=1}^{n}U_{ni}+C^{\top}\frac
{\sqrt{n}}{\sqrt{n_{0}}}\frac{\sqrt{n_{o}}}{n_{o}}H^{-1}\sum_{x_{i}>\delta
}Q_{i}\right)  +\mathbf{o}_{p}\left(  1\right)  \nonumber \\
&  =B^{-1}\left(  \frac{\sqrt{n}}{n}\sum_{i=1}^{n}U_{ni}+C^{\top}\frac
{\sqrt{n}}{n}\frac{n}{n_{o}}H^{-1}\sum_{i=1}^{n}Q_{i}I\left(  x_{i}%
>\delta \right)  \right)  +\mathbf{o}_{p}\left(  1\right)  \nonumber \\
&  =\frac{\sqrt{n}}{n}\sum_{i=1}^{n}B^{-1}\left(  U_{ni}+C^{\top}p^{-2}%
H^{-1}Q_{i}I\left(  x_{i}>\delta \right)  \right)  +\mathbf{o}_{p}\left(
1\right)  \nonumber \\
&  =\frac{\sqrt{n}}{n}\sum_{i=1}^{n}\mathbf{\varphi}_{i}\left(
y_{i},\mathbf{w}_{i};\mathbf{\beta}\right)  +\mathbf{o}_{p}\left(
1\right)
\end{align}
where $p=\lim_{n\rightarrow \infty}\frac{\sqrt{n_{o}}}{\sqrt{n}}$. \ Thus
$\mathbf{\varphi}_{i}\left(  y_{i},\mathbf{w}_{i};\mathbf{\beta
}\right)  =$ $B^{-1}\left(  U_{ni}+p^{-2}C^{\top}H^{-1}Q_{i}I\left(
x_{i}>\delta \right)  \right)  $ is the influence function and the asymptotic
normality follows from the central limit theorem and Slutsky's theorem, with
the asymptotic variance $\Sigma_{\beta}=Var\left(  \mathbf{\varphi}%
_{i}\left(  y_{i},\mathbf{w}_{i};\mathbf{\beta}\right)  \right)  $. \ 

Note that $B^{-1}p^{-2}C^{\top}H^{-1}Q_{i}I\left(  x_{i}>\delta \right)  $ is
also the projection of $B^{-1}U_{ni}$ onto the nuisance tangent space of
$\mathbf{\eta}$ and thus $\mathbf{\varphi}_{i}\left(  y_{i}%
,\mathbf{w}_{i};\mathbf{\beta}\right)  $ is the efficient influence
function for $\widehat{\mathbf{\beta}}$ (Chapter 4 of \cite{tsiatis2006semiparametric}).

\section*{ Web Appendix C. Proof of Theorem 2.}
A semiparametric linear regression is used to model $t = T^{-1}(x)$, where $T$ is a strictly decreasing transformation such that its range is nonnegative:
\begin{equation} \label{eq:semiparametric}
t=\gamma_0+\mathbf{\gamma}_1^{\top} \mathbf{z}+\epsilon,
\end{equation}
where $\epsilon$ follows a non-parametric distribution characterized by an infinite-dimensional function $\xi(t)$. Let $\nu=T^{-1}(\delta)$ denote the right-censoring point for $t$. We first estimate $\mathbf{\gamma} = (\gamma_{0}, \mathbf{\gamma}_{1}^{\top})^{\top}$. Given the estimator $\widehat{\mathbf{\gamma}}=(\widehat{\gamma}_{0},\widehat{\mathbf{\gamma}}_{1}^{\top})^{\top}$, we estimate $\xi$ using the nonparametric Kaplan-Meier (KM) survival curve applied to the residuals $\epsilon(t)=t-(\widehat{\gamma}_{0}+\widehat{\mathbf{\gamma}}_{1}^{\top}\mathbf{z})$. In this case, $\widehat{\xi}(t)$ represents the empirical cumulative distribution function (CDF) of $\xi(t) = 1 - S(t)$, where the survival function $S(t)$ is estimated by the KM estimator. Given the estimators $\widehat{\mathbf{\gamma}}$ and $\widehat{\xi}$, we estimate $\mathbf{\beta}$ using the generalized estimating equations (GEE):
\begin{equation} \label{eq:GEE}
\mathbf{U}_n(\mathbf{\beta}, \widehat{\mathbf{\gamma}}, \widehat{\xi})=\frac{1}{n}\sum_{i=1}^{n}\left\{\mathbbm{1}(t_{i} \leq \nu)  U_{i}(\mathbf{\beta})+\mathbbm{1}(t_{i} > \nu)  \widetilde{U}_{i}(\mathbf{\beta}, \widehat{\mathbf{\gamma}}, \widehat{\xi})\right\}=\mathbf{0},
\end{equation}
where
\begin{align*}
U_{i}(\mathbf{\beta}) &= D_i(\mathbf{\beta}) V_i^{-1} S_i(\mathbf{\beta}), \quad \widetilde{U}_{i}(\mathbf{\beta}, \mathbf{\gamma}, \xi) =\widetilde{D}_{i}(\mathbf{\beta}, \mathbf{\gamma}, \xi) \widetilde{V}_i^{-1} \widetilde{S}_{i}(\mathbf{\beta}, \mathbf{\gamma}, \xi), \\
D_i(\mathbf{\beta}) &= \frac{\partial}{\partial \mathbf{\beta}^{\top}} h_i(\mathbf{\beta}), \quad S_i(\mathbf{\beta})=y_i-h_i(\mathbf{\beta}), \quad V_i=\operatorname{Var}(y_i \mid t_i, \mathbf{u}_i), \\
\widetilde{D}_{i}(\mathbf{\beta}, \mathbf{\gamma}, \xi) &= \frac{\partial}{\partial \mathbf{\beta}^{\top}}\widetilde{h}_i(\mathbf{\beta}, \mathbf{\gamma}, \xi), \quad \widetilde{S}_{i}(\mathbf{\beta}, \mathbf{\gamma}, \xi)=y_{i}-\widetilde{h}_i(\mathbf{\beta}, \mathbf{\gamma}, \xi), \quad \widetilde{V}_i=\operatorname{Var}(y_i \mid t_i = E(t \mid t \leq \nu), \mathbf{u}_i), \\
h_i(\mathbf{\beta}) &= g^{-1}((1, T(t_i), \mathbf{u}_i^{\top}) \mathbf{\beta}), \quad \widetilde{h}_{i}(\mathbf{\beta}, \mathbf{\gamma}, \xi) = \int_{\nu-(\gamma_{0}+\mathbf{\gamma}_{1}^{\top}\mathbf{z}_{i})}^{\tau}g^{-1}((1, T(t+\gamma_{0}+\mathbf{\gamma}_{1}^{\top}\mathbf{z}_{i}), \mathbf{u}_i^{\top}) \mathbf{\beta}) d\xi(t),
\end{align*}
and $g$ is the link function.

Note that $\tau$ is a constant truncation time defined in Condition 4 in Regularity Conditions. From Condition 4 in Regularity Conditions, $t$ is truncated by some $\tau < \nu$. Therefore, the existence of $\tau$ rules out the trivial case where the integral value is zero when no association exists between $t$ and $\mathbf{u}$. Given $\widehat{\mathbf{\gamma}}$ and the KM survival function estimator $\widehat{\xi}(t)$,
\[
\widetilde{h}_{i}(\mathbf{\beta}, \widehat{\mathbf{\gamma}}, \widehat{\xi}) = \sum_{i=1}^n g^{-1}((1, T(t+\gamma_{0}+\mathbf{\gamma}_{1}^{\top}\mathbf{z}_{i}), \mathbf{u}_i^{\top}) \mathbf{\beta})
\mathbbm{1}(\nu-(\widehat{\gamma}_{0}+\widehat{\mathbf{\gamma}}_{1}^{\top}\mathbf{z}_{i}) \leq t_i \leq \tau).
\]

As in the parametric case, the GEE estimator $\widehat{\mathbf{\beta}}$ is
consistent and asymptotically normal.

\subsection*{ Web Appendix C.1. Regularity Conditions. }\label{sec:C.1}

Define a deterministic function
\[
\mathbf{U}(\mathbf{\beta}, \mathbf{\gamma}, \xi)=E[\mathbbm{1}(t \leq \nu) U(\mathbf{\beta})+\mathbbm{1}(t > \nu) \widetilde{U}(\mathbf{\beta}, \mathbf{\gamma}, \xi)],
\]
and denote the true value of $(\mathbf{\beta}, \mathbf{\gamma}, \xi)$ by $(\mathbf{\beta}_0, \mathbf{\gamma}_0, \xi_0)$. Let $\mathcal{Y} \subset \mathbb{R}$ denote the sample space of the response variable $y$, $\mathcal{T} \subset \mathbb{R}$ the sample space of $t = T^{-1}(x)$, $\mathcal{U} \subset \mathbb{R}^{p-2}$ the sample space of the covariate $\mathbf{u}$, $\Theta \subset \mathbb{R}^p$ the parameter space of $\mathbf{\beta}$, $\Gamma \subset \mathbb{R}^q$ the parameter space of $\mathbf{\gamma}$, and $\Xi$ the parameter space of $\xi$. In addition to the assumptions of bounded support for $(t, \mathbf{u}, y)$ and compactness of the parameter spaces $\Theta$ and $\Gamma$, we state a set of regularity conditions for Theorem \ref{thm:1} in Section 2:

\textit{Condition} 1. $\mathbf{U} (\mathbf{\beta}, \mathbf{\gamma}_0, \xi_0)$ has a unique root $\mathbf{\beta}_0$;

\textit{Condition} 2. for any constant $K<\infty, \sup_{t \in[\nu, K]}|T(t)| \leqslant c_0<\infty, \sup_{t \in[\nu, K]}|T'(t)| \leqslant c_1<\infty$, and $\sup_{t \in[\nu, K]}|T''(t)| \leqslant c_2<\infty$, where $T'$ and $T''$ are the first and second derivatives of $T$, respectively, and $c_0, c_1$ and $c_2$ are constants;

\textit{Condition} 3. $\epsilon$ has bounded density $\xi'$ with bounded derivative $\xi''$; in other words, $\xi' \leq c_3 < \infty$ and $|\xi''| \leq c_4 < \infty$ for constants $c_3$ and $c_4$, and
\[
\int_{-\infty}^{\infty}\{\xi''(t) / \xi'(t)\}^2 \xi'(t) d t<\infty;
\]

\textit{Condition} 4. there is a constant truncation time $\tau < \infty$ such that
\[
P(\min(t, \nu)-(\gamma_0+\mathbf{\gamma}_1^{\top}\mathbf{z}) > \tau \mid \mathbf{u}) \geq \zeta > 0
\]
for all $\mathbf{u} \in \mathcal{U}$ and $\mathbf{\gamma} \in \Gamma$;

\textit{Condition} 5. $g$ is a bounded monotone function such that $g^{-1}$ has a bounded first derivative $(g^{-1})'$ and a bounded Lipschitz second derivative $(g^{-1})''$;

\textit{Condition} 6. there exist constants $\varepsilon>0$ such that
$|\mathbf{\gamma}-\mathbf{\gamma}_0| + \|\xi-\xi_0\| < \varepsilon$, where $|\cdot|$ is the Euclidean norm and $\|\cdot\|$ is the $\ell^{\infty}$ norm;

\textit{Condition} 7. there exists a neighborhood $N \subset \Gamma \times \Xi$ of $(\mathbf{\gamma}_0, \xi_0)$ such that
\begin{itemize}
\item[(1)] for each $(\mathbf{\gamma}, \xi) \in N$, the map $\mathbf{\beta} \mapsto \mathbf{U}(\mathbf{\beta}, \mathbf{\gamma}, \xi)$ is differentiable, and its derivative $D_{\mathbf{\beta}}\mathbf{U}(\mathbf{\beta}, \mathbf{\gamma}, \xi) \in \mathbb{R}^p$ is jointly continuous in $(\mathbf{\gamma}, \xi)$ over $N$;

\item[(2)] for each fixed $\mathbf{\beta} \in \Theta$ and $(\mathbf{\gamma}, \xi) \in N$, the map $\mathbf{\gamma} \mapsto \widetilde{U}(\mathbf{\beta}, \mathbf{\gamma}, \xi)$ is differentiable, and its derivative $D_{\mathbf{\gamma}}\widetilde{U}(\mathbf{\beta}, \mathbf{\gamma}, \xi) \in \mathbb{R}^q$ is continuous in $\xi$ over $N$;

\item[(3)] for each fixed $\mathbf{\beta} \in \Theta$ and $(\mathbf{\gamma}, \xi) \in N$, the map $\xi \mapsto \widetilde{U}(\mathbf{\beta}, \mathbf{\gamma}, \xi)$ is Fréchet differentiable; i.e., for every $(\mathbf{\gamma}, \xi) \in N$ satisfying $|\mathbf{\gamma} - \mathbf{\gamma}_0| + \|\xi - \xi_0\| < \varepsilon$, there exists a bounded linear operator $D_{\xi}\widetilde{U}(\mathbf{\beta}, \mathbf{\gamma}, \xi) \in \Xi^*$ such that
\[
\frac{|\widetilde{U}(\mathbf{\beta}, \mathbf{\gamma}, \xi + h_\xi) - \widetilde{U}(\mathbf{\beta}, \mathbf{\gamma}, \xi) - D_{\xi}\widetilde{U}(\mathbf{\beta}, \mathbf{\gamma}, \xi)(h_\xi)|}{\|h_\xi\|} \to 0
\]
as $\|h_\xi\| \to 0$, where $\Xi^*$ is the dual space of $\Xi$;
\end{itemize}

\textit{Condition} 8. for each fixed $\mathbf{\beta} \in \Theta$ and $(\mathbf{\gamma}, \xi) \in N$, there exists an integrable function $M(t, \mathbf{u}, y)$ such that for all small $h_\xi$,
\[
\left| \frac{\mathbf{U}(\mathbf{\beta}, \mathbf{\gamma}, \xi + h_\xi)(t, \mathbf{u}, y) - \mathbf{U}(\mathbf{\beta}, \mathbf{\gamma}, \xi)(t, \mathbf{u}, y) - D_{\xi} \mathbf{U}(\mathbf{\beta}, \mathbf{\gamma}, \xi)(h_\xi)(t, \mathbf{u}, y)}{\|h_\xi\|} \right| \leq M(t, \mathbf{u}, y),
\]
and $\int M(t, \mathbf{u}, y) \, dP(t, \mathbf{u}, y) < \infty$.

\subsection*{ Web Appendix C.2. Asymptotic Properties }\label{sec:C.2}

\begin{theorem} \label{thm:1}
Consider models \eqref{eq:semiparametric} and \eqref{eq:GEE}, and denote the true value of $\mathbf{\beta}$ by $\mathbf{\beta}_0$. Suppose that the regularity conditions in Appendix C.1. hold. Then the GEE estimator $\widehat{\mathbf{\beta}}$ satisfying $\mathbf{U}_n(\widehat{\mathbf{\beta}}, \widehat{\mathbf{\gamma}}, \widehat{\xi})=0$ converges in outer probability to $\mathbf{\beta}_0$, and $n^{1 / 2}(\widehat{\mathbf{\beta}}-\mathbf{\beta}_0)$ converges weakly to a zero-mean normal random variable.
\end{theorem}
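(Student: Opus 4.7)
The overall plan is to treat $\widehat{\boldsymbol{\beta}}$ as a Z-estimator whose estimating equation depends on a finite-dimensional nuisance $\widehat{\boldsymbol{\gamma}}$ and an infinite-dimensional nuisance $\widehat{\xi}$. I would follow the standard two-step strategy used for semiparametric Z-estimators with an infinite-dimensional nuisance (e.g.\ the master theorem in Chapter 3.3 of van der Vaart and Wellner, or the general framework of Chen, Linton and van Keilegom): first prove consistency via uniform convergence to a deterministic limit with a unique zero, then prove asymptotic normality by linearization around $(\boldsymbol{\beta}_0,\boldsymbol{\gamma}_0,\xi_0)$ and pathwise differentiation with respect to the nuisance.

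For consistency, I would fix the population criterion $\mathbf{U}(\boldsymbol{\beta},\boldsymbol{\gamma},\xi)$ and show the three ingredients of the argmax/argzero theorem. (i) Uniform convergence over $\boldsymbol{\beta}\in\Theta$ of $\mathbf{U}_n(\boldsymbol{\beta},\widehat{\boldsymbol{\gamma}},\widehat{\xi})$ to $\mathbf{U}(\boldsymbol{\beta},\boldsymbol{\gamma}_0,\xi_0)$: compactness of $\Theta$ and $\Gamma$, bounded supports for $(t,\mathbf{u},y)$, boundedness of $T,g^{-1}$ and their derivatives (Conditions~2 and 5), and Condition 4 which keeps the integrand in $\widetilde{h}_i$ bounded away from the trivial case, together imply that the classes of functions indexed by $(\boldsymbol{\beta},\boldsymbol{\gamma},\xi)$ appearing in $U_i$ and $\widetilde U_i$ are uniformly bounded and Glivenko--Cantelli. (ii) $\widehat{\boldsymbol{\gamma}}\to\boldsymbol{\gamma}_0$ in probability is a standard consequence of Z-estimator theory applied to the semiparametric linear model in (\ref{eq:semiparametric}), and $\|\widehat{\xi}-\xi_0\|_\infty\to 0$ in outer probability on $[0,\tau]$ follows from uniform consistency of the Kaplan--Meier estimator on a compact interval bounded away from the tail by Condition~4. (iii) Condition 1 (unique root of $\mathbf{U}(\cdot,\boldsymbol{\gamma}_0,\xi_0)$) closes the argument and yields $\widehat{\boldsymbol{\beta}}\to_{p^*}\boldsymbol{\beta}_0$.

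For asymptotic normality, I would expand
\begin{equation*}
\mathbf{0}=\mathbf{U}_n(\widehat{\boldsymbol{\beta}},\widehat{\boldsymbol{\gamma}},\widehat{\xi})=\mathbf{U}_n(\boldsymbol{\beta}_0,\widehat{\boldsymbol{\gamma}},\widehat{\xi})+D_{\boldsymbol{\beta}}\mathbf{U}(\boldsymbol{\beta}_0,\boldsymbol{\gamma}_0,\xi_0)(\widehat{\boldsymbol{\beta}}-\boldsymbol{\beta}_0)+o_p(n^{-1/2}+\|\widehat{\boldsymbol{\beta}}-\boldsymbol{\beta}_0\|),
\end{equation*}
where the remainder uses joint continuity of $D_{\boldsymbol{\beta}}\mathbf{U}$ from Condition 7(1). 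The key decomposition is
\begin{equation*}
\sqrt{n}\,\mathbf{U}_n(\boldsymbol{\beta}_0,\widehat{\boldsymbol{\gamma}},\widehat{\xi})=\sqrt{n}\,\mathbf{U}_n(\boldsymbol{\beta}_0,\boldsymbol{\gamma}_0,\xi_0)+\sqrt{n}\bigl[\mathbf{U}(\boldsymbol{\beta}_0,\widehat{\boldsymbol{\gamma}},\widehat{\xi})-\mathbf{U}(\boldsymbol{\beta}_0,\boldsymbol{\gamma}_0,\xi_0)\bigr]+o_p(1),
\end{equation*}
where the $o_p(1)$ term is the stochastic equicontinuity remainder. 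I would control it by showing that the class
$\{(y,t,\mathbf{u},\mathbf{z})\mapsto \mathbbm{1}(t\le\nu)U_i(\boldsymbol{\beta}_0)+\mathbbm{1}(t>\nu)\widetilde U_i(\boldsymbol{\beta}_0,\boldsymbol{\gamma},\xi):(\boldsymbol{\gamma},\xi)\in N\}$
is Donsker, which reduces via Conditions~2, 5, and the Lipschitz bound in Condition~8 to a bracketing-entropy calculation for a product of bounded BV functions indexed by $\xi$ and smooth functions indexed by $\boldsymbol{\gamma}$. The second (drift) term I linearize with the Fréchet derivatives of Condition 7(2)--(3), giving
\begin{equation*}
\sqrt{n}\bigl[\mathbf{U}(\boldsymbol{\beta}_0,\widehat{\boldsymbol{\gamma}},\widehat{\xi})-\mathbf{U}(\boldsymbol{\beta}_0,\boldsymbol{\gamma}_0,\xi_0)\bigr]=D_{\boldsymbol{\gamma}}\mathbf{U}\sqrt{n}(\widehat{\boldsymbol{\gamma}}-\boldsymbol{\gamma}_0)+D_\xi\mathbf{U}\bigl(\sqrt{n}(\widehat{\xi}-\xi_0)\bigr)+o_p(1).
\end{equation*}
Substituting the known asymptotically linear expansions of $\widehat{\boldsymbol{\gamma}}$ (Z-estimator IF) and $\widehat{\xi}$ (the Breslow-type martingale representation of the Kaplan--Meier estimator on $[0,\tau]$) expresses the right-hand side as an empirical mean of an IID influence function plus $o_p(1)$. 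The usual CLT plus Slutsky then yields $\sqrt{n}(\widehat{\boldsymbol{\beta}}-\boldsymbol{\beta}_0)\Rightarrow N(\mathbf{0},\Sigma_\beta)$ with $\Sigma_\beta=B^{-1}\operatorname{Var}(\varphi_i)B^{-\top}$ for the composite IF $\varphi_i$.

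The main obstacle will be the empirical-process bookkeeping around the KM-estimated $\widehat{\xi}$. Two points are delicate: first, verifying the Donsker property for the class of functions indexed by $\xi$ that enters $\widetilde U_i$, where the integrand is a Stieltjes integral against $d\xi$ and therefore requires integration by parts together with Conditions 2, 3, and 5 to produce a bounded-variation envelope and a manageable bracketing entropy; second, establishing the $\sqrt{n}$-Fréchet expansion of the KM functional on $[0,\tau]$ and matching the resulting linear functional $D_\xi \mathbf{U}$ to the martingale-integral representation so that the final influence function is a genuine IID sum. Condition~4 is what makes both pieces work by keeping everything away from the censoring boundary, but verifying the Donsker and Fr\'echet differentiability claims uniformly in a neighborhood $N$ of $(\boldsymbol{\gamma}_0,\xi_0)$ is where the technical effort concentrates.
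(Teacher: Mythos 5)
Your proposal is correct and follows essentially the same route as the paper: the paper also casts $\widehat{\boldsymbol{\beta}}$ as a Z-estimator with nuisance $(\boldsymbol{\gamma},\xi)$ and invokes the general Z-estimation machinery of Nan--Wellner/Kong--Nan (their Lemmas 1 and 2 are exactly your consistency-via-uniform-convergence step and your stochastic-equicontinuity-plus-Fr\'echet-linearization step), verifying the Donsker conditions by integration by parts on the Stieltjes integral against $d\xi$ together with VC/Lipschitz-preservation arguments and the $\sqrt{n}$-consistency of $\widehat{\boldsymbol{\gamma}}$ and the Kaplan--Meier $\widehat{\xi}$ on a compact interval guaranteed by Condition 4. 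The technical obstacles you flag are precisely where the paper's proof spends its effort.
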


The proof of Theorem \ref{thm:1} is based on the general Z-estimation theory of \cite{kong2016semiparametric} and \cite{nan2013general}, which is provided in the following Lemmas \ref{lm:1} and \ref{lm:2} for our problem setting. Define $\rho\{(\mathbf{\gamma}, \xi),(\mathbf{\gamma}_0, \xi_0)\}=\left|\mathbf{\gamma}-\mathbf{\gamma}_0\right|+\left\|\xi-\xi_0\right\|$. We use $P^*$ to denote outer probability, which is defined as $P^*(A)=\inf \{P(B): B \supset A, B \in \mathcal{B}\}$ for any subset $A$ of $\Omega$ in a probability space $(\Omega, \mathcal{B}, P)$.

\subsection*{ Web Appendix C.2.1 Consistency }

\setcounter{theorem}{0}
\begin{lemma} \label{lm:1}
(Consistency) Suppose $\mathbf{\beta}_0$ is the unique solution to $\mathbf{U} (\mathbf{\beta}, \mathbf{\gamma}_0, \xi_0)=0$ in the parameter space $\Theta$ and $(\widehat{\mathbf{\gamma}}, \widehat{\xi})$ are estimators of $(\mathbf{\gamma}_0, \xi_0)$ such that $\rho\{(\widehat{\mathbf{\gamma}}, \widehat{\xi}),(\mathbf{\gamma}_0, \xi_0)\}=o_{p^*}(1)$. If
\[
\sup_{\mathbf{\beta} \in \Theta, \rho\{(\mathbf{\gamma}, \xi), (\mathbf{\gamma}_0, \xi_0)\} \leq \varepsilon_n} \frac{\left|\mathbf{U}_n(\mathbf{\beta}, \mathbf{\gamma}, \xi)-\mathbf{U}(\mathbf{\beta}, \mathbf{\gamma}_0, \xi_0)\right|}{1+\left|\mathbf{U}_n(\mathbf{\beta}, \mathbf{\gamma}, \xi)\right|+\left|\mathbf{U}(\mathbf{\beta}, \mathbf{\gamma}_0, \xi_0)\right|}=o_{p^*}(1)
\]
for every sequence $\left\{\varepsilon_n \downarrow 0\right\}$, then $\widehat{\mathbf{\beta}}$ satisfying $\mathbf{U}_n(\widehat{\mathbf{\beta}}, \widehat{\mathbf{\gamma}}, \widehat{\xi})=o_{p^*}(1)$ converges in outer probability to $\mathbf{\beta}_0$.
\end{lemma}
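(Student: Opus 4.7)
The plan is to carry out the standard three-step Z-estimator consistency argument: first, transfer the empirical near-zero condition $\mathbf{U}_n(\widehat{\mathbf{\beta}}, \widehat{\mathbf{\gamma}}, \widehat{\xi}) = o_{p^*}(1)$ into a statement about the deterministic population criterion $\mathbf{U}(\cdot,\mathbf{\gamma}_0,\xi_0)$ evaluated at $\widehat{\mathbf{\beta}}$; second, combine that with continuity and compactness of $\Theta$ to extract a convergent subsequence of $\widehat{\mathbf{\beta}}$; and third, invoke the uniqueness hypothesis to rule out any limit other than $\mathbf{\beta}_0$. The uniform ratio-form convergence in the hypothesis is the pivotal ingredient that lets the argument proceed without imposing uniform boundedness on $\mathbf{U}_n$ or $\mathbf{U}$.

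For the first step, I would use $\rho\{(\widehat{\mathbf{\gamma}},\widehat{\xi}),(\mathbf{\gamma}_0,\xi_0)\} = o_{p^*}(1)$ to select a deterministic sequence $\varepsilon_n \downarrow 0$ with $P^*\{\rho > \varepsilon_n\} \to 0$, and then plug the random $(\widehat{\mathbf{\gamma}},\widehat{\xi})$ into the uniform ratio bound to obtain
\[
r_n \;:=\; \frac{\bigl|\mathbf{U}_n(\widehat{\mathbf{\beta}},\widehat{\mathbf{\gamma}},\widehat{\xi}) - \mathbf{U}(\widehat{\mathbf{\beta}},\mathbf{\gamma}_0,\xi_0)\bigr|}{1 + |\mathbf{U}_n(\widehat{\mathbf{\beta}},\widehat{\mathbf{\gamma}},\widehat{\xi})| + |\mathbf{U}(\widehat{\mathbf{\beta}},\mathbf{\gamma}_0,\xi_0)|} \;=\; o_{p^*}(1).
\]
Combined with $|\mathbf{U}_n(\widehat{\mathbf{\beta}},\widehat{\mathbf{\gamma}},\widehat{\xi})| = o_{p^*}(1)$, the triangle inequality gives
\[
|\mathbf{U}(\widehat{\mathbf{\beta}},\mathbf{\gamma}_0,\xi_0)| \;\le\; |\mathbf{U}_n(\widehat{\mathbf{\beta}},\widehat{\mathbf{\gamma}},\widehat{\xi})| + r_n\bigl(1 + |\mathbf{U}_n(\widehat{\mathbf{\beta}},\widehat{\mathbf{\gamma}},\widehat{\xi})| + |\mathbf{U}(\widehat{\mathbf{\beta}},\mathbf{\gamma}_0,\xi_0)|\bigr),
\]
and on the event $\{r_n < 1/2\}$, which has outer probability tending to one, this can be solved for $|\mathbf{U}(\widehat{\mathbf{\beta}},\mathbf{\gamma}_0,\xi_0)|$, yielding $\mathbf{U}(\widehat{\mathbf{\beta}},\mathbf{\gamma}_0,\xi_0) = o_{p^*}(1)$.

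For the second and third steps, I would argue by contradiction. Continuity of $\mathbf{\beta} \mapsto \mathbf{U}(\mathbf{\beta},\mathbf{\gamma}_0,\xi_0)$ follows from the differentiability in Condition 7(1) together with the smoothness of $g^{-1}$ and $T$ from Conditions 2 and 5. If $\widehat{\mathbf{\beta}}$ fails to converge in outer probability to $\mathbf{\beta}_0$, then along some subsequence $\{n_k\}$ one has $|\widehat{\mathbf{\beta}}_{n_k} - \mathbf{\beta}_0| > \eta$ on an event of non-vanishing outer probability. Compactness of $\Theta$ lets me extract a further subsequence with $\widehat{\mathbf{\beta}}_{n_k} \to \mathbf{\beta}^*$ for some $\mathbf{\beta}^* \neq \mathbf{\beta}_0$; continuity then propagates the first step to $\mathbf{U}(\mathbf{\beta}^*,\mathbf{\gamma}_0,\xi_0) = 0$, contradicting uniqueness of $\mathbf{\beta}_0$ as a root.

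The hard part is the outer-probability bookkeeping, not the algebra: because $\widehat{\xi}$ is an infinite-dimensional Kaplan--Meier functional, $\mathbf{U}_n(\widehat{\mathbf{\beta}},\widehat{\mathbf{\gamma}},\widehat{\xi})$ is not guaranteed to be jointly measurable, so every limit must be phrased in terms of $o_{p^*}$ and outer almost-sure representations along subsequences. Care is also needed in the rearrangement step above, since the division by $(1 - r_n)$ is licit only on $\{r_n < 1/2\}$; the complementary event must be handled by its vanishing outer probability rather than pointwise. Once these measurability issues are settled, both the algebraic rearrangement in step one and the compactness-plus-uniqueness contradiction in steps two and three are routine.
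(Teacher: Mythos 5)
Your argument is essentially sound, but note first that the paper does not prove Lemma~1 at all: it is imported as a known result from the general Z-estimation theory of Nan and Wellner and of Kong and Nan, and the paper's actual labor is the verification of the lemma's hypotheses for its specific $\mathbf{U}_n$ (the Donsker-class and mean-value arguments in the ``Proof of consistency in Theorem~1''). What you have written is a self-contained proof of the imported abstract lemma, and it follows the standard route: solve the ratio inequality on the event $\{r_n<1/2\}$ to obtain $\mathbf{U}(\widehat{\boldsymbol{\beta}},\boldsymbol{\gamma}_0,\xi_0)=o_{p^*}(1)$, then convert uniqueness of the root into well-separation. The first step is correct, including the device of choosing a deterministic $\varepsilon_n\downarrow 0$ that dominates $\rho\{(\widehat{\boldsymbol{\gamma}},\widehat{\xi}),(\boldsymbol{\gamma}_0,\xi_0)\}$ with outer probability tending to one, which is exactly what is needed to plug the random nuisance estimators into the uniform bound.

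Two points deserve tightening, neither of which is a gap in substance. First, your contradiction step extracts a convergent subsequence from the \emph{random} sequence $\widehat{\boldsymbol{\beta}}_{n_k}$, which is not legitimate without an outer-almost-sure representation; the clean version applies compactness and continuity to the deterministic set $\{\boldsymbol{\beta}\in\Theta:|\boldsymbol{\beta}-\boldsymbol{\beta}_0|\geq\eta\}$ to conclude $c_\eta:=\inf\{|\mathbf{U}(\boldsymbol{\beta},\boldsymbol{\gamma}_0,\xi_0)|:\boldsymbol{\beta}\in\Theta,\ |\boldsymbol{\beta}-\boldsymbol{\beta}_0|\geq\eta\}>0$ (a minimizing sequence would otherwise yield a second root), and then bounds $P^*(|\widehat{\boldsymbol{\beta}}-\boldsymbol{\beta}_0|\geq\eta)\leq P^*(|\mathbf{U}(\widehat{\boldsymbol{\beta}},\boldsymbol{\gamma}_0,\xi_0)|\geq c_\eta)\to 0$. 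Second, you are right that the lemma as stated nowhere assumes compactness of $\Theta$ or continuity of $\boldsymbol{\beta}\mapsto\mathbf{U}(\boldsymbol{\beta},\boldsymbol{\gamma}_0,\xi_0)$; these must be imported from the standing assumptions of the appendix (the compactness declaration preceding Condition~1 and Condition~7(1)), and without them uniqueness alone does not imply well-separation, so the statement would fail. Flagging that dependence explicitly, as you do, is the right instinct.
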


\begin{proof}[Proof of consistency in Theorem \ref{thm:1}]
We prove consistency using Lemma \ref{lm:1}. Since $\widehat{\mathbf{\gamma}}$ is $n^{1/2}$-consistent \cite{Nan2009,ding2011sieve}  and $\widehat{\xi}$ is also $n^{1/2}$-consistent over a finite interval, as shown in Lemma 3 of \cite{kong2016semiparametric}, we have
\[
\rho\{(\widehat{\mathbf{\gamma}}, \widehat{\xi}),(\mathbf{\gamma}_0, \xi_0)\}=o_{p^*}(1).
\]
Given that $\mathbf{\beta}_0$ is the unique solution to $\mathbf{U} (\mathbf{\beta}, \mathbf{\gamma}_0, \xi_0)=0$ from Condition 1 of regularity conditions, we only need to show that
\[
\sup_{\mathbf{\beta} \in \Theta, \rho\{(\mathbf{\gamma}, \xi),(\mathbf{\gamma}_0, \xi_0)\} \leq \varepsilon_n}|\mathbf{U}_n (\mathbf{\beta}, \mathbf{\gamma}, \xi)-\mathbf{U} (\mathbf{\beta}, \mathbf{\gamma}_0, \xi_0)|=o_{p^*}(1)
\]
for every sequence $\varepsilon_n \downarrow 0$. Note that
\begin{align}
&\sup_{\mathbf{\beta} \in \Theta, \rho\{(\mathbf{\gamma}, \xi),(\mathbf{\gamma}_0, \xi_0)\} \leq \varepsilon_n}|\mathbf{U}_n (\mathbf{\beta}, \mathbf{\gamma}, \xi)-\mathbf{U} (\mathbf{\beta}, \mathbf{\gamma}_0, \xi_0)| \nonumber\\
&\quad \leq \sup _{\mathbf{\beta} \in \Theta}|(\mathbb{P}_n-P)(\mathbbm{1}(t \leq \nu)U(\mathbf{\beta}))| \label{eq:3}\\
&\quad \quad + \sup _{\mathbf{\beta} \in \Theta, \rho\{(\mathbf{\gamma}, \xi),(\mathbf{\gamma}_0, \xi_0)\} \leq \varepsilon_n} |(\mathbb{P}_n-P)(\mathbbm{1}(t > \nu)\widetilde{U}(\mathbf{\beta}, \mathbf{\gamma}, \xi))| \label{eq:4}\\
&\quad \quad + \sup _{\mathbf{\beta} \in \Theta, \rho\{(\mathbf{\gamma}, \xi),(\mathbf{\gamma}_0, \xi_0)\} \leq \varepsilon_n} P|\widetilde{U}(\mathbf{\beta}, \mathbf{\gamma}, \xi)-\widetilde{U}(\mathbf{\beta}, \mathbf{\gamma}_0, \xi_0)|. \label{eq:5}
\end{align}
It suffices to show that \eqref{eq:3}, \eqref{eq:4}, and \eqref{eq:5} are equal to $o_{p^*}(1)$.

We first prove that \eqref{eq:5} is equal to $o_{p^*}(1)$. By the mean value theorem and Conditions 7 (2)-(3), we obtain
\begin{align*}
\widetilde{U}(\mathbf{\beta}, \mathbf{\gamma}, \xi)-\widetilde{U}(\mathbf{\beta}, \mathbf{\gamma}_0, \xi_0)&=\widetilde{U}(\mathbf{\beta}, \mathbf{\gamma}, \xi)-\widetilde{U}(\mathbf{\beta}, \mathbf{\gamma}_0, \xi)+\widetilde{U}(\mathbf{\beta}, \mathbf{\gamma}_0, \xi)-\widetilde{U}(\mathbf{\beta}, \mathbf{\gamma}_0, \xi_0) \\
&= D_{\mathbf{\gamma}}\widetilde{U}(\mathbf{\beta}, \mathbf{\gamma}', \xi) (\mathbf{\gamma}-\mathbf{\gamma}_0)+D_{\xi}\widetilde{U}(\mathbf{\beta}, \mathbf{\gamma}_0, \xi') (\xi-\xi_0),
\end{align*}
where $\mathbf{\gamma}' = (1-c_1)\mathbf{\gamma} + c_1\mathbf{\gamma}_0$ and $\xi' = (1-c_2)\xi + c_2\xi_0$ for some $c_1, c_2 \in (0, 1)$. Thus,
\begin{align*}
&\sup_{\mathbf{\beta} \in \Theta, \rho\{(\mathbf{\gamma}, \xi),(\mathbf{\gamma}_0, \xi_0)\} \leq \varepsilon_n} P|\widetilde{U}(\mathbf{\beta}, \mathbf{\gamma}, \xi)-\widetilde{U}(\mathbf{\beta}, \mathbf{\gamma}_0, \xi_0)|\\
&\quad \leq \sup _{\mathbf{\beta} \in \Theta, \rho\{(\mathbf{\gamma}, \xi),(\mathbf{\gamma}_0, \xi_0)\} \leq \varepsilon_n} \left(P|D_{\mathbf{\gamma}}\widetilde{U}(\mathbf{\beta}, \mathbf{\gamma}', \xi)|+P\|D_{\xi}\widetilde{U}(\mathbf{\beta}, \mathbf{\gamma}_0, \xi')\|\right)\varepsilon_n \\
&\quad = o_{p^*}(1).
\end{align*}

Next, we demonstrate that \eqref{eq:3} is equal to $o_{p^*}(1)$. Consider the class of function
\begin{equation} \label{eq:6}
\{\mathbbm{1}(t \leq \nu)U(\mathbf{\beta}): t \in \mathcal{T}, \mathbf{\beta} \in \Theta\}.
\end{equation}
By Exercise 14 on page 152 in \cite{van1996weak} , $\{(t, \mathbf{u}^{\top})^{\top} \mapsto \beta_0+\beta_1T(t)+\mathbf{\beta}_2^{\top}\mathbf{u}: t \leq \nu, \mathbf{u} \in \mathbb{R}^{p-2}, (\beta_0, \beta_1, \mathbf{\beta}_2^{\top})^{\top} \in \mathbb{R}^p\}$ is a VC-class of dimension $p$, so it is Donsker. From Condition 5, $(g^{-1})'$ and $(g^{-1})''$ are bounded, so $g^{-1}$ and $(g^{-1})'$ are Lipschitiz. Therefore, by Theorem 2.10.6 in \cite{van1996weak}, $\{h(\mathbf{\beta}): \mathbf{\beta} \in \Theta\}$ and $\{D(\mathbf{\beta}): \mathbf{\beta} \in \Theta\}$ are Donsker, and thus the function class \eqref{eq:6} is Donsker. Hence, $\eqref{eq:3}$ is equal to $o_{p^*}(1)$.

Finally, we verify that \eqref{eq:4} is equal to $o_{p^*}(1)$. Consider the class of function
\begin{equation} \label{eq:7}
\{\widetilde{U}(\mathbf{\beta}, \mathbf{\gamma}, \xi): \mathbf{\beta} \in \Theta, \mathbf{\gamma} \in \Gamma, \xi \in \Xi, \rho\{(\mathbf{\gamma}, \xi),(\mathbf{\gamma}_0, \xi_0)\} \leq \varepsilon_n\}.
\end{equation}
Note that $\widetilde{U}(\mathbf{\beta}, \mathbf{\gamma}, \xi) = \widetilde{D}(\mathbf{\beta}, \mathbf{\gamma}, \xi)\widetilde{V}^{-1}\widetilde{S}(\mathbf{\beta}, \mathbf{\gamma}, \xi)$, where
\begin{align*}
\widetilde{D}(\mathbf{\beta}, \mathbf{\gamma}, \xi) &= \int_{\nu-(\gamma_{0}+\mathbf{\gamma}_{1}^{\top}\mathbf{z})}^{\tau}\frac{\partial}{\partial \mathbf{\beta}^{\top}}g^{-1}((1, T(t+\gamma_{0}+\mathbf{\gamma}_{1}^{\top}\mathbf{z}), \mathbf{u}^{\top}) \mathbf{\beta}) d\xi(t) \\
&= \int_{\nu-(\gamma_{0}+\mathbf{\gamma}_{1}^{\top}\mathbf{z})}^{\tau}(g^{-1})'((1, T(t+\gamma_{0}+\mathbf{\gamma}_{1}^{\top}\mathbf{z}), \mathbf{u}^{\top}) \mathbf{\beta})(1, T(t+\gamma_{0}+\mathbf{\gamma}_{1}^{\top}\mathbf{z}), \mathbf{u}^{\top}) d\xi(t) \\
&= (g^{-1})'((1, T(\tau+\gamma_{0}+\mathbf{\gamma}_{1}^{\top}\mathbf{z}), \mathbf{u}^{\top})\mathbf{\beta})(1, T(\tau+\gamma_{0}+\mathbf{\gamma}_{1}^{\top}\mathbf{z}), \mathbf{u}^{\top}) \xi(\tau) \\
&\quad -(g^{-1})'((1, T(\nu), \mathbf{u}^{\top})\mathbf{\beta})(1, T(\nu), \mathbf{u}^{\top}) \xi(\nu-(\gamma_{0}+\mathbf{\gamma}_{1}^{\top}\mathbf{z})) \\
&\quad -\int_{\nu-c_5}^{\tau} \mathbbm{1}(t \geq \nu - (\gamma_{0}+\mathbf{\gamma}_{1}^{\top}\mathbf{z}))\xi(t) \\
&\quad \quad \quad \left[(g^{-1})''((1, T(t+\gamma_{0}+\mathbf{\gamma}_{1}^{\top}\mathbf{z}), \mathbf{u}^{\top})\mathbf{\beta})\beta_1T'(t+\gamma_{0}+\mathbf{\gamma}_{1}^{\top}\mathbf{z}) (1, T(t+\gamma_{0}+\mathbf{\gamma}_{1}^{\top}\mathbf{z}), \mathbf{u}^{\top}) \right.\\
&\quad \quad \quad \quad \left. + \; (g^{-1})'((1, T(t+\gamma_{0}+\mathbf{\gamma}_{1}^{\top}\mathbf{z}), \mathbf{u}^{\top})\mathbf{\beta})(0, T'(t+\gamma_{0}+\mathbf{\gamma}_{1}^{\top}\mathbf{z}), \mathbf{0}^{\top}) \right] dt,
\end{align*}
\begin{align*}
\widetilde{S}(\mathbf{\beta}, \mathbf{\gamma}, \xi) &= y - \int_{\nu-(\gamma_{0}+\mathbf{\gamma}_{1}^{\top}\mathbf{z})}^{\tau}g^{-1}((1, T(t+\gamma_{0}+\mathbf{\gamma}_{1}^{\top}\mathbf{z}), \mathbf{u}^{\top}) \mathbf{\beta}) d\xi(t) \\
&= y - g^{-1}((1, T(\tau+\gamma_{0}+\mathbf{\gamma}_{1}^{\top}\mathbf{z}), \mathbf{u}^{\top})\mathbf{\beta}) \xi(\tau) + g^{-1}((1, T(\nu), \mathbf{u}^{\top})\mathbf{\beta}) \xi(\nu-(\gamma_{0}+\mathbf{\gamma}_{1}^{\top}\mathbf{z})) \\
&\quad + \; \int_{\nu-c_5}^{\tau} \mathbbm{1}(t \geq \nu-(\gamma_{0}+\mathbf{\gamma}_{1}^{\top}\mathbf{z})) \xi(t) \\
&\quad \quad \quad (g^{-1})'((1, T(t+\gamma_{0}+\mathbf{\gamma}_{1}^{\top}\mathbf{z}), \mathbf{u}^{\top})\mathbf{\beta})\beta_1T'(t+\gamma_{0}+\mathbf{\gamma}_{1}^{\top}\mathbf{z}) dt,
\end{align*}
and $c_5 = \sup_{\mathbf{\gamma} \in \Gamma, \mathbf{u} \in \mathcal{U}}|\gamma_{0}+\mathbf{\gamma}_{1}^{\top}\mathbf{z}| < \infty$. $T$ and $T'$ are Lipschitz by Condition 2 and $g^{-1}$, $(g^{-1})'$, and $(g^{-1})''$ are Lipschitz by Conditions 5, so $\{T(t): t \in \mathcal{T}\}$, $\{T'(t): t \in \mathcal{T}\}$, $\{g^{-1}((1, T(t+\gamma_{0}+\mathbf{\gamma}_{1}^{\top}\mathbf{z}), \mathbf{u}^{\top})\mathbf{\beta}): t \in \mathcal{T}, \mathbf{\beta} \in \Theta,  \mathbf{\gamma} \in \Gamma\}$, $\{(g^{-1})'((1, T(\tau+\gamma_{0}+\mathbf{\gamma}_{1}^{\top}\mathbf{z}), \mathbf{u}^{\top})\mathbf{\beta}): t \in \mathcal{T}, \mathbf{\beta} \in \Theta,  \mathbf{\gamma} \in \Gamma\}$, and $\{(g^{-1})''((1, T(\tau+\gamma_{0}+\mathbf{\gamma}_{1}^{\top}\mathbf{z}), \mathbf{u}^{\top})\mathbf{\beta}): t \in \mathcal{T}, \mathbf{\beta} \in \Theta,  \mathbf{\gamma} \in \Gamma\}$ belong to Donsker classes by Theorem 2.10.6 in \cite{van1996weak}. Moreover, by Exercise 9 on page 151 and Exercise 14 on page 152 in \cite{van1996weak}, we know that the class of indicator functions of half spaces is a VC-class, so $\{\mathbbm{1}(t \geq \nu-(\gamma_{0}+\mathbf{\gamma}_{1}^{\top}\mathbf{z})): t \in \mathcal{T}, (\gamma_0, \mathbf{\gamma}_1^{\top})^{\top} \in \Gamma\}$ is Donsker. By Lemma 5 in \cite{kong2016semiparametric}, we know that $\{\xi(\nu-(\gamma_0 + \mathbf{\gamma}_1^\top \mathbf{z})): (\gamma_0, \mathbf{\gamma}_1^{\top})^{\top} \in \Gamma\}$ is Donsker. By Theorem 2.10.3 in \cite{van1996weak} , the permanence of the Donsker property for the closure of the convex hull, we have
\begin{align}
\Bigg\{ &\int_{\nu-c_5}^{\tau} \mathbbm{1}(t \geq \nu - (\gamma_{0}+\mathbf{\gamma}_{1}^{\top}\mathbf{z}))\xi(t) \\
&\quad \quad \quad \left[(g^{-1})''((1, T(t+\gamma_{0}+\mathbf{\gamma}_{1}^{\top}\mathbf{z}), \mathbf{u}^{\top})\mathbf{\beta})\beta_1T'(t+\gamma_{0}+\mathbf{\gamma}_{1}^{\top}\mathbf{z}) (1, T(t+\gamma_{0}+\mathbf{\gamma}_{1}^{\top}\mathbf{z}), \mathbf{u}^{\top}) \right.\\
&\quad \quad \quad \quad \left. + \; (g^{-1})'((1, T(t+\gamma_{0}+\mathbf{\gamma}_{1}^{\top}\mathbf{z}), \mathbf{u}^{\top})\mathbf{\beta})(0, T'(t+\gamma_{0}+\mathbf{\gamma}_{1}^{\top}\mathbf{z}), \mathbf{0}^{\top}) \right] dt: \\
&\mathbf{\beta} \in \Theta, \mathbf{\gamma} \in \Gamma, \xi \in \Xi, \rho\{(\mathbf{\gamma}, \xi),(\mathbf{\gamma}_0, \xi_0)\} \leq \varepsilon_n\Bigg\}
\end{align}
and
\begin{align}
\Bigg\{ &\int_{\nu-c_5}^{\tau} \mathbbm{1}(t \geq \nu-(\gamma_{0}+\mathbf{\gamma}_{1}^{\top}\mathbf{z})) \xi(t) \cdot (g^{-1})'((1, T(t+\gamma_{0}+\mathbf{\gamma}_{1}^{\top}\mathbf{z}), \mathbf{u}^{\top})\mathbf{\beta})\beta_1T'(t+\gamma_{0}+\mathbf{\gamma}_{1}^{\top}\mathbf{z}) dt: \\
&\mathbf{\beta} \in \Theta, \mathbf{\gamma} \in \Gamma, \xi \in \Xi, \rho\{(\mathbf{\gamma}, \xi),(\mathbf{\gamma}_0, \xi_0)\} \leq \varepsilon_n\Bigg\}
\end{align}
are Donsker. Hence, $\widetilde{D}(\mathbf{\beta}, \mathbf{\gamma}, \xi)$ and $\widetilde{S}(\mathbf{\beta}, \mathbf{\gamma}, \xi)$ belong to Donsker classes, and it follows that \eqref{eq:4} is equal to $o_{p^*}(1)$.
\end{proof}

\subsection*{ Web Appendix C.2.2 Asymptotic Normality }

\begin{lemma} \label{lm:2}
(Rate of convergence and asymptotic representation) Suppose that $\widehat{\mathbf{\beta}}$ satisfying $\mathbf{U}_n(\widehat{\mathbf{\beta}}, \widehat{\mathbf{\gamma}}, \widehat{\xi})$ $= o_{p^*}(n^{-1 / 2})$ is a consistent estimator of $\mathbf{\beta}_0$ that is a solution to $\mathbf{U} (\mathbf{\beta}, \mathbf{\gamma}_0, \xi_0)=0$ in $\Theta$, and that $(\widehat{\mathbf{\gamma}}, \widehat{\xi})$ is an estimator of $(\mathbf{\gamma}_0, \xi_0)$ satisfying $\rho\{(\widehat{\mathbf{\gamma}}, \widehat{\xi}),(\mathbf{\gamma}_0, \xi_0)\}=O_{p^*}(n^{-1 / 2})$. Suppose the following four conditions are satisfied:

\begin{itemize}
\item[(i)] (stochastic equicontinuity)
\[
\frac{|n^{1 / 2}(\mathbf{U}_n-\mathbf{U})(\widehat{\mathbf{\beta}}, \widehat{\mathbf{\gamma}}, \widehat{\xi})-n^{1 / 2}(\mathbf{U}_n-\mathbf{U})(\mathbf{\beta}_0, \mathbf{\gamma}_0, \xi_0)|}{1+n^{1 / 2}|\mathbf{U}_n(\widehat{\mathbf{\beta}}, \widehat{\mathbf{\gamma}}, \widehat{\xi})|+n^{1 / 2}|\mathbf{U}(\widehat{\mathbf{\beta}}, \widehat{\mathbf{\gamma}}, \widehat{\xi})|}=o_{p^*}(1);
\]

\item[(ii)] $n^{1 / 2} \mathbf{U}_n(\mathbf{\beta}_0, \mathbf{\gamma}_0, \xi_0)=O_{p^*}(1)$;

\item[(iii)] (smoothness) there exist continuous matrices $D_{\mathbf{\beta}}\mathbf{U}(\mathbf{\beta}_0, \mathbf{\gamma}_0, \xi_0)$, $D_{\mathbf{\gamma}}\mathbf{U}(\mathbf{\beta}_0, \mathbf{\gamma}_0, \xi_0)$, and a continuous linear functional $D_{\xi}\mathbf{U}(\mathbf{\beta}_0, \mathbf{\gamma}_0, \xi_0)$ such that
\begin{align*}
&|\mathbf{U}(\widehat{\mathbf{\beta}}, \widehat{\mathbf{\gamma}}, \widehat{\xi})-\mathbf{U}(\mathbf{\beta}_0, \mathbf{\gamma}_0, \xi_0)-D_{\mathbf{\beta}}\mathbf{U}(\mathbf{\beta}_0, \mathbf{\gamma}_0, \xi_0)(\widehat{\mathbf{\beta}}-\mathbf{\beta}_0) \\
&\quad -D_{\mathbf{\gamma}}\mathbf{U}(\mathbf{\beta}_0, \mathbf{\gamma}_0, \xi_0)(\widehat{\mathbf{\gamma}}-\mathbf{\gamma}_0)-D_{\xi}\mathbf{U}(\mathbf{\beta}_0, \mathbf{\gamma}_0, \xi_0)(\widehat{\xi}-\xi_0)| \\
&=o(|\widehat{\mathbf{\beta}}-\mathbf{\beta}_0|)+o(\rho\{(\widehat{\mathbf{\gamma}}, \widehat{\xi}),(\mathbf{\gamma}_0, \xi_0)\}),
\end{align*}
here we assume that the matrix $D_{\mathbf{\beta}}\mathbf{U}(\mathbf{\beta}_0, \mathbf{\gamma}_0, \xi_0)$ is nonsingular; and

\item[(iv)] $n^{1 / 2}D_{\mathbf{\gamma}}\mathbf{U}(\mathbf{\beta}_0, \mathbf{\gamma}_0, \xi_0)(\widehat{\mathbf{\gamma}}-\mathbf{\gamma}_0)=O_{p^*}(1)$ and $n^{1 / 2}D_{\xi}\mathbf{U}(\mathbf{\beta}_0, \mathbf{\gamma}_0, \xi_0)(\widehat{\xi}-\xi_0)=O_{p^*}(1)$.
\end{itemize}

\noindent
Then $\widehat{\mathbf{\beta}}$ is $n^{1 / 2}$-consistent and
\begin{align*}
n^{1 / 2}(\widehat{\mathbf{\beta}}-\mathbf{\beta}_0) &= -\{D_{\mathbf{\beta}}\mathbf{U}\left(\mathbf{\beta}_0, \mathbf{\gamma}_0, \xi_0\right)\}^{-1} n^{1 / 2}\{\left(\mathbf{U}_n-\mathbf{U}\right)\left(\mathbf{\beta}_0, \mathbf{\gamma}_0, \xi_0\right) \\
&\quad \quad +D_{\mathbf{\gamma}}\mathbf{U}(\mathbf{\beta}_0, \mathbf{\gamma}_0, \xi_0)(\widehat{\mathbf{\gamma}}-\mathbf{\gamma}_0)+D_{\xi}\mathbf{U}(\mathbf{\beta}_0, \mathbf{\gamma}_0, \xi_0)(\widehat{\xi}-\xi_0)\}+o_{p^*}(1).
\tag{17}
\end{align*}

\end{lemma}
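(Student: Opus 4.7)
The plan is the standard general Z-estimation argument: substitute the estimator into the estimating equation, add and subtract at the truth to separate an empirical-process remainder from a smooth drift, linearise the drift, and invert the $\boldsymbol{\beta}$-derivative. Concretely, starting from the hypothesis $\mathbf{U}_n(\widehat{\boldsymbol{\beta}}, \widehat{\boldsymbol{\gamma}}, \widehat{\xi}) = o_{p^*}(n^{-1/2})$, I would write
\begin{align*}
\mathbf{U}_n(\widehat{\boldsymbol{\beta}}, \widehat{\boldsymbol{\gamma}}, \widehat{\xi})
&= (\mathbf{U}_n - \mathbf{U})(\boldsymbol{\beta}_0, \boldsymbol{\gamma}_0, \xi_0) \\
&\quad + \bigl[(\mathbf{U}_n - \mathbf{U})(\widehat{\boldsymbol{\beta}}, \widehat{\boldsymbol{\gamma}}, \widehat{\xi}) - (\mathbf{U}_n - \mathbf{U})(\boldsymbol{\beta}_0, \boldsymbol{\gamma}_0, \xi_0)\bigr] + \mathbf{U}(\widehat{\boldsymbol{\beta}}, \widehat{\boldsymbol{\gamma}}, \widehat{\xi}),
\end{align*}
where the first term is $O_{p^*}(n^{-1/2})$ by (ii), the bracketed empirical-process increment is controlled by (i), and, using $\mathbf{U}(\boldsymbol{\beta}_0, \boldsymbol{\gamma}_0, \xi_0)=0$, the last piece expands via (iii) as
\begin{align*}
\mathbf{U}(\widehat{\boldsymbol{\beta}}, \widehat{\boldsymbol{\gamma}}, \widehat{\xi}) = D_{\boldsymbol{\beta}}\mathbf{U}\,(\widehat{\boldsymbol{\beta}}-\boldsymbol{\beta}_0) + D_{\boldsymbol{\gamma}}\mathbf{U}\,(\widehat{\boldsymbol{\gamma}}-\boldsymbol{\gamma}_0) + D_{\xi}\mathbf{U}\,(\widehat{\xi}-\xi_0) + o(|\widehat{\boldsymbol{\beta}}-\boldsymbol{\beta}_0|) + o(\rho),
\end{align*}
with all derivatives evaluated at $(\boldsymbol{\beta}_0, \boldsymbol{\gamma}_0, \xi_0)$.

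Next I would multiply through by $n^{1/2}$, group the noise, and solve for the leading term by inverting the nonsingular $D_{\boldsymbol{\beta}}\mathbf{U}(\boldsymbol{\beta}_0, \boldsymbol{\gamma}_0, \xi_0)$. Hypotheses (ii) and (iv) guarantee that the centred empirical process and the two nuisance directional derivative contributions are $O_{p^*}(1)$ after scaling, and the Fréchet remainder in (iii) is $o_{p^*}(1)$ after scaling because $\rho\{(\widehat{\boldsymbol{\gamma}},\widehat{\xi}),(\boldsymbol{\gamma}_0,\xi_0)\}=O_{p^*}(n^{-1/2})$. Rearrangement and left-multiplication by $-\{D_{\boldsymbol{\beta}}\mathbf{U}(\boldsymbol{\beta}_0, \boldsymbol{\gamma}_0, \xi_0)\}^{-1}$ then produce the claimed linear representation, with $n^{1/2}(\widehat{\boldsymbol{\beta}}-\boldsymbol{\beta}_0)=O_{p^*}(1)$ following immediately.

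The main obstacle is the self-referential form of condition (i): its denominator $1 + n^{1/2}|\mathbf{U}_n(\widehat{\boldsymbol{\beta}},\widehat{\boldsymbol{\gamma}},\widehat{\xi})| + n^{1/2}|\mathbf{U}(\widehat{\boldsymbol{\beta}},\widehat{\boldsymbol{\gamma}},\widehat{\xi})|$ itself depends on $n^{1/2}|\widehat{\boldsymbol{\beta}}-\boldsymbol{\beta}_0|$ through the linearisation of the drift, so one cannot declare the bracketed empirical-process increment to be $o_{p^*}(n^{-1/2})$ without first establishing a preliminary rate. I would resolve this with a bootstrapping step: retain (i) in its self-normalised form, substitute the expansion from (iii) to bound $n^{1/2}|\mathbf{U}(\widehat{\boldsymbol{\beta}},\widehat{\boldsymbol{\gamma}},\widehat{\xi})|$ by $\|D_{\boldsymbol{\beta}}\mathbf{U}\|\cdot n^{1/2}|\widehat{\boldsymbol{\beta}}-\boldsymbol{\beta}_0|$ plus $O_{p^*}(1)$ terms from (iv), and then invoke the nonsingularity of $D_{\boldsymbol{\beta}}\mathbf{U}$ to conclude $n^{1/2}|\widehat{\boldsymbol{\beta}}-\boldsymbol{\beta}_0|=O_{p^*}(1)$ by a contradiction-style inequality. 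Once this $\sqrt{n}$-rate is secured, the denominator in (i) is $O_{p^*}(1)$, so the increment is genuinely $o_{p^*}(n^{-1/2})$, and the $o(|\widehat{\boldsymbol{\beta}}-\boldsymbol{\beta}_0|)$ term in (iii) collapses to $o_{p^*}(n^{-1/2})$, leaving only routine rearrangement to arrive at the asymptotic linear expansion.
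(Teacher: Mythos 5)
Your proposal is correct. Note, however, that the paper does not actually prove Lemma 2 at all: it is imported verbatim as part of the general Z-estimation theory of Kong and Nan and of Nan and Wellner, and the paper's appendix only \emph{verifies} conditions (i)--(iv) for the specific estimating function $\mathbf{U}_n$ in order to prove Theorem 2. Your argument is essentially the standard proof of that cited result, and you correctly identify the one genuine subtlety -- that the self-normalized denominator in condition (i) contains $n^{1/2}|\mathbf{U}(\widehat{\boldsymbol{\beta}},\widehat{\boldsymbol{\gamma}},\widehat{\xi})|$, which grows with $n^{1/2}|\widehat{\boldsymbol{\beta}}-\boldsymbol{\beta}_0|$, so a preliminary rate must be bootstrapped from the decomposition together with the nonsingularity of $D_{\boldsymbol{\beta}}\mathbf{U}(\boldsymbol{\beta}_0,\boldsymbol{\gamma}_0,\xi_0)$ before the empirical-process increment can be declared $o_{p^*}(n^{-1/2})$. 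With that rate in hand, your rearrangement yields exactly the representation (17), so the proof is complete and consistent with the source the paper relies on.
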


\begin{proof}[Proof of asymptotic normality in Theorem \ref{thm:1}]
We now verify all the conditions in Lemma \ref{lm:2}. Condition (i) holds because $\{\mathbbm{1}(t \leq \nu) U(\mathbf{\beta})+\mathbbm{1}(t > \nu) \widetilde{U}(\mathbf{\beta}, \mathbf{\gamma}, \xi): t \in \mathcal{T}, \mathbf{\beta} \in \Theta, \mathbf{\gamma} \in \Gamma, \xi \in \Xi, \rho\{(\mathbf{\gamma}, \xi),(\mathbf{\gamma}_0, \xi_0)\} \leq \varepsilon_n\}$ is Donsker, as was shown during the proof of consistency in Theorem \ref{thm:1}.

By Conditions 2 and 5, we have $E[|\mathbbm{1}(t \leq \nu) U(\mathbf{\beta_0})+\mathbbm{1}(t > \nu) \widetilde{U}(\mathbf{\beta}_0, \mathbf{\gamma}_0, \xi_0)|^2] < \infty$, which follows from a direct calculation. Therefore, Condition (ii) holds by the classical central limit theorem for independent and identically distributed $(x_i, y_i)$.

For Condition (iii), given that $\rho\{(\mathbf{\gamma}, \xi),(\mathbf{\gamma}_0, \xi_0)\}=O_{p^*}(n^{-1 / 2})$, we apply a Taylor expansion in $\mathbf{\beta}$ and $\mathbf{\gamma}$, along with the definition of the Fr\'echet derivative, to obtain
\begin{align*}
&\mathbf{U}(\widehat{\mathbf{\beta}}, \widehat{\mathbf{\gamma}}, \widehat{\xi})-\mathbf{U}(\mathbf{\beta}_0, \mathbf{\gamma}_0, \xi_0) \\
&\quad =\mathbf{U}(\widehat{\mathbf{\beta}}, \widehat{\mathbf{\gamma}}, \widehat{\xi})-\mathbf{U}(\mathbf{\beta}_0, \widehat{\mathbf{\gamma}}, \widehat{\xi})+\mathbf{U}(\mathbf{\beta}_0, \widehat{\mathbf{\gamma}}, \widehat{\xi})-\mathbf{U}(\mathbf{\beta}_0, \mathbf{\gamma}_0, \widehat{\xi})+\mathbf{U}(\mathbf{\beta}_0, \mathbf{\gamma}_0, \widehat{\xi})-\mathbf{U}(\mathbf{\beta}_0, \mathbf{\gamma}_0, \xi_0) \\
&\quad =D_{\mathbf{\beta}}\mathbf{U}(\mathbf{\beta}_0, \widehat{\mathbf{\gamma}}, \widehat{\xi})(\widehat{\mathbf{\beta}}-\mathbf{\beta}_0)+D_{\mathbf{\gamma}}\mathbf{U}(\mathbf{\beta}_0, \mathbf{\gamma}_0, \widehat{\xi})(\widehat{\mathbf{\gamma}}-\mathbf{\gamma}_0)+P\left(\mathbbm{1}(t > \nu)D_{\xi} \widetilde{U}(\mathbf{\beta}_0, \mathbf{\gamma}_0, \xi_0)\right)(\widehat{\xi}-\xi_0) \\
&\quad \quad +o(|\widehat{\mathbf{\beta}}-\mathbf{\beta}_0|)+o(|\widehat{\mathbf{\gamma}}-\mathbf{\gamma}_0|)+o(\|\widehat{\xi}-\xi_0\|).
\end{align*}
By Conditions 7 (1)-(2), and the continuous mapping theorem, we have
\[
|D_{\mathbf{\beta}}\mathbf{U}(\mathbf{\beta}_0, \widehat{\mathbf{\gamma}}, \widehat{\xi})-D_{\mathbf{\beta}}\mathbf{U}(\mathbf{\beta}_0, \mathbf{\gamma}_0, \xi_0)|=o_{p^*}(1)
\]
and
\[
|D_{\mathbf{\gamma}}\mathbf{U}(\mathbf{\beta}_0, \mathbf{\gamma}_0, \widehat{\xi})-D_{\mathbf{\gamma}}\mathbf{U}(\mathbf{\beta}_0, \mathbf{\gamma}_0, \xi_0)|=o_{p^*}(1).
\]
Finally, by Condition 8, $D_{\xi}\mathbf{U}(\mathbf{\beta}_0, \mathbf{\gamma}_0, \xi_0) = P\left(\mathbbm{1}(t > \nu)D_{\xi} \widetilde{U}(\mathbf{\beta}_0, \mathbf{\gamma}_0, \xi_0)\right)$. Thus, Condition (iii) holds.

Since both $\widehat{\mathbf{\gamma}}$ and $\widehat{\xi}$ are $n^{1 / 2}$-consistent, Condition (iv) holds automatically under Conditions (i)-(iii).

Hence, by Lemma 2, $\widehat{\mathbf{\beta}}$ is $n^{1 / 2}$-consistent and
\begin{align*}
n^{1 / 2}(\widehat{\mathbf{\beta}}-\mathbf{\beta}_0) &= -\{D_{\mathbf{\beta}}\mathbf{U}\left(\mathbf{\beta}_0, \mathbf{\gamma}_0, \xi_0\right)\}^{-1} n^{1 / 2}\{\left(\mathbf{U}_n-\mathbf{U}\right)\left(\mathbf{\beta}_0, \mathbf{\gamma}_0, \xi_0\right) \\
&\quad \quad +D_{\mathbf{\gamma}}\mathbf{U}(\mathbf{\beta}_0, \mathbf{\gamma}_0, \xi_0)(\widehat{\mathbf{\gamma}}-\mathbf{\gamma}_0)+D_{\xi}\mathbf{U}(\mathbf{\beta}_0, \mathbf{\gamma}_0, \xi_0)(\widehat{\xi}-\xi_0)\}+o_{p^*}(1).
\end{align*}
Note that $n^{1 / 2}\{(\mathbf{U}_n-\mathbf{U})(\mathbf{\beta}_0, \mathbf{\gamma}_0, \xi_0)+D_{\mathbf{\gamma}}\mathbf{U}(\mathbf{\beta}_0, \mathbf{\gamma}_0, \xi_0)(\widehat{\mathbf{\gamma}}-\mathbf{\gamma}_0)+D_{\xi}\mathbf{U}(\mathbf{\beta}_0, \mathbf{\gamma}_0, \xi_0)(\widehat{\xi}-\xi_0)\}$ is a sum of independent and identically distributed terms, and the classical central limit theorem applies. Thus,
\[
n^{1 / 2}(\widehat{\mathbf{\beta}}-\mathbf{\beta}_0) \stackrel{d}{\rightarrow} \mathcal{N}(0, \mathbf{\Sigma})
\]
 where $\boldsymbol{\Sigma}$ denotes the covariance matrix of the limiting distribution characterized by Equation (17).

\end{proof}

\newpage

\bibliographystyle{abbrv}
\bibliography{ref.bib}

\bigskip

\bigskip

\end{document}